\newcommand{\no}[1]{}
\newcommand{\eps}{\varepsilon}
\newcommand{\point}{\mathtt{point}}
\newcommand{\noderange}{\mathtt{noderange}}
\newcommand{\Patrascu}{P\v{a}tra\c{s}cu}
\newtheorem{lemma}{Lemma}
\newtheorem{theorem}{Theorem}
\newenvironment{proof}{\trivlist\item[]\emph{Proof}:}%
{\unskip\nobreak\hskip 1em plus 1fil\nobreak$\Box$\parfillskip=0pt%
\endtrivlist}
\newcommand{\shortver}[1]{}
\newcommand{\shlongver}[2]{#2}
\newcommand{\cC}{{\cal C}}
\newcommand{\cE}{{\cal E}}
\newcommand{\cA}{{\cal A}}
\newcommand{\cB}{{\cal B}}
\newcommand{\cR}{{\cal R}}
\newcommand{\ra}{\mathrm{rank}}
\newcommand{\ssucc}{\mathrm{succ}}
\begin{document}
\title{New Data Structures for Orthogonal Range Reporting and Range Minima Queries}
\author{Yakov Nekrich\thanks{{Department of Computer Science, Michigan Technological University. Email {\tt yakov.nekrich@googlemail.com}}}}
\date{}
\maketitle
\begin{abstract}
  In this paper we present new data structures for two extensively studied variants of the orthogonal range searching problem.

  First, we describe a data structure that supports two-dimensional orthogonal range minima queries in $O(n)$ space and $O(\log^{\eps} n)$ time, where $n$ is the number of points in the data structure and $\eps$ is an arbitrarily small positive constant.  Previously known linear-space solutions for this problem require $O(\log^{1+\eps} n)$ (Chazelle, 1988) or $O(\log n\log \log n)$ time (Farzan et al., 2012).  A modification of our data structure uses space $O(n\log \log n)$ and supports range minima queries in time $O(\log \log n)$. Both results can be extended to support three-dimensional five-sided reporting queries.  

  Next, we turn to the four-dimensional orthogonal range reporting problem  and  present a  data structure that answers queries in optimal $O(\log n/\log \log n + k)$ time, where $k$ is the number of points in the answer. This is the first data structure that achieves the optimal query time for this problem.

  Our results are obtained by exploiting the  properties of three-dimensional shallow cuttings. 
\end{abstract}

\thispagestyle{empty}
\newpage
\setcounter{page}{1}

\section{Introduction}
\label{sec:intro}
Orthogonal range searching is a fundamental and extensively studied data structuring problem~\cite{McCreight85,Chazelle86,ChazelleG86,ChazelleG86a,ChazelleE87,Chazelle88,Overmars88,Chazelle90a,Chazelle90b,SubramanianR95,VengroffV96,AlstrupBR00,AlstrupBR01,Nekrich07isaac,Nekrich07,Nekrich07algorithmica,KarpinskiN09,NN12,ChanLP11,Chan13}. In this problem we store a set of multi-dimensional points $P$ in a data structure so that for an arbitrary axis-parallel rectangle $Q$ some information about points in $Q\cap P$ must be returned. Different variants of range searching queries have been studied by researchers: an orthogonal range reporting query asks for the list of all points in $P\cap Q$; an orthogonal  range emptiness query determines whether  $P\cap Q=\emptyset$, an orthogonal range counting query asks for the number of points in $P\cap Q$. In the range minima/maxima problem each point is assigned a priority  and we must return the point of smallest/highest priority in $P\cap Q$.

In this paper we study the orthogonal range reporting and range minima problems.
We improve the query time of linear-space range minima data structure
in two dimensions from $O(\log n\log \log n)$ to $O(\log^{\eps} n)$. Henceforth $n$ denotes the total number of points in the data structure and $\eps$ is an arbitrarily small positive constant. 
We also describe a data structure with optimal $O(\log n/\log\log n)$  query time for the four-dimensional orthogonal range reporting problem. 

\paragraph{Range Minima Queries.}
The best previously known trade-offs are listed in Table~\ref{tab:rangemin}. The study of compact data structures for range searching problems was initiated by Willard~\cite{Willard86} and Chazelle~\cite{Chazelle88}.  In the latter work, published over three decades ago,  Chazelle~\cite{Chazelle88} described an $O(n)$-space data structure that supports two-dimensional range minimum  queries\footnote{Range minima and range maxima problems are equivalent. In this paper we will talk about the range minima problem.} in $O(\log^{1+\eps} n)$ time.  The only improvement for an $O(n)$-space data structure was achieved by Farzan et al.~\cite{FarzanMR12} who  reduced the query time to $O(\log n \log \log n)$. 

Better query times for this problem can be achieved at a cost of increasing the space usage. Chan et al.~\cite{ChanLP11} described a data structure that uses $O(n\log^{\eps} n)$ space and answers queries in optimal $O(\log \log n)$ time. Another trade-off was achieved by Karpinski and Nekrich\cite{KarpinskiN09}; combining their data structure with the result of~\cite{Chan13}, we can obtain a data structure that uses $O(n (\log \log n)^3)$ space and answers queries in $O((\log\log n)^2)$ time. A five-sided range reporting query is a special case of three-dimensional orthogonal range reporting queries where a query is bounded on five sides. Range minima problem is closely related to the  special case of three-dimensional range reporting when the query range is bounded on five sides. All previous results, except for~\cite{FarzanMR12}, can be extended to support five-sided queries.

Range minima queries are to be contrasted with two-dimensional emptiness queries. In this problem we store a set of two-dimensional points; given an axis-parallel query rectangle $Q$, we must decide whether $Q\cap P=\emptyset$. Emptiness queries can be answered in $O(\log^{\eps}n)$ time using an $O(n)$-space data structure~\cite{ChanLP11}, or in $O(\log \log n)$ time using an $O(n\log \log n)$-space data structure~\cite{AlstrupBR00,ChanLP11}.  

In this paper we demonstrate that the gap between range emptiness and range maxima in two dimensions can be closed completely. We present a data structure that uses $O(n)$ space and answers range minima queries in $O(\log^{\eps} n)$ time. We also describe a data structure that uses $O(n\log\log n)$ space and answers queries in $O(\log\log n)$ time. Our results can be  extended to five-sided queries in three dimensions.

Our data structure employs the standard recursive grid approach frequently used in orthogonal range searching problems~\cite{AlstrupBR00,ChanLP11,FarzanMR12,KarpinskiN09,ChanNRT18}. The novel part of our method is a compact data structure supporting three-dimensional dominance queries for each recursive sub-structure of the grid. This data structure is based on the notion of a $t$-shallow cutting. We show that a shallow cutting can be "covered", in a certain sense,  by a set of rectangles. Every covering rectangle contains a small number of points and is unbounded along the third dimension. We can specify the relevant points by their positions in rectangles. This approach essentially reduces the problem of storing points in a recursive grid to the problem of storing points in a compact range tree aka the ball inheritance problem~\cite{Chazelle88,ChanLP11}. 
\begin{table}[tb]
  \centering
  \begin{tabular}{|l|c|c|c|}\hline
    Ref. & Space & Range-Minima &  Five-sided\\\hline
    \cite{Chazelle88} & $O(n)$ & $O(\log^{1+\eps} n)$ & $O(\log^{1+\eps} n+ k\log^{\eps}n)$\\
    \cite{FarzanMR12} & $O(n)$ & $O(\log n \log\log n)$ & - \\
    \cite{ChanLP11}  & $O(n\log^{\eps} n )$ & $O(\log\log n )$ & $O(\log\log n + k)$\\
    \cite{KarpinskiN09} + \cite{Chan13} & $O(n (\log\log n)^3)$ & $O(\log\log n)^2$ & $O((\log\log n)^2 + k\log\log n)$\\  \hline
    Our & $O(n)$ & $O(\log^{\eps} n)$  & $O((k+1)\log^{\eps}n)$ \\
    Our & $O(n\log\log n)$ &$O(\log\log n)$ & $O((k+1)\log\log n)$\\ \hline
  \end{tabular}
  \caption{Previous best and new results on range minima and five-sided range reporting problems.}
  \label{tab:rangemin}
\end{table}

\paragraph{Multi-Dimensional Range Reporting.}
Orthogonal range reporting queries can be answered in $O(\log \log n + k)$ time in two~\cite{Overmars88,AlstrupBR00} and three dimensions~\cite{Chan13}. By the lower bound for predecessor queries~\cite{PatrascuT06}, this query time is optimal. 
The best previously known four-dimensional data structure supports queries in $O(\log n + k)$ time.  According to the lower bound of \Patrascu, any data structure that consumes $O(n\mathtt{polylog}(n))$ space requires $\Omega(\log n/\log \log n)$ time to answer four-dimensional queries; this lower bound is also valid for emptiness queries. In this paper we  describe, for the first time, a data structure that achieves the optimal $O(\log n/\log\log n+k)$ query time for the four-dimensional range reporting problem. Henceforth $k$ denotes the number of points in the query range.

Previous solutions of this problem employed range trees to solve the orthogonal range reporting problem in four dimensions. To answer a query, we must navigate a  node-to-leaf path\footnote{Depending on the type of the query and the data structure, we may have to navigate along $O(1)$ different node-to-leaf  paths. For simplicity, we discuss the case of exactly one path.} in a range tree and answer a three-dimensional range reporting query in every node on that path.  By the lower bound for three-dimensional reporting, we have to  spend $\Omega(\log\log n)$ time in every visited node. The node degree of the range tree is bounded by $\log^{O(1)} n$; a higher node degree would lead to prohibitively high space usage because we must store a separate data structure for every range of node children.  Thus we must navigate along a path of  $\Omega(\log n/\log\log n)$ nodes and spend $\Omega(\log \log n)$ time in every node. For this reason  all previous methods need $\Omega(\log n)$ time.

Again, in this paper we achieve better query time by using $t$-shallow cuttings. Our solution is based on embedding a high-degree tree $T_0$ into the  range tree $T$ and storing $t$-shallow cuttings in the nodes of $T_0$. These $t$-shallow cuttings provide us with additional information and enable us to spend $o(\log \log n)$ time in every visited node of $T$ when a query is answered. In order to achieve the optimal query time, we use a sequence of embedded trees $T_0$, $T_1$, $\ldots$ with decreasing node degrees.

Throughout this paper $n$ will denote the total number of points in a data structure. The number of points in a sub-structure of a global structure will be sometimes denoted by $m$. 
We  assume w.l.o.g. that all point coordinates are positive integers bounded by $n$ and all points have different coordinates. The general case can be reduced to this case by applying the reduction to rank space technique.  Our results are valid in the standard RAM model. In this model we assume that the word size is $\Theta(\log n)$ and  that standard arithmetic operations can be performed on words in constant time. (In some cases our methods also make use of ``non-standard'' operations. However we can always implement these operations with table look-ups. The necessary look-up tables can be initialized in $O(n^{\eps})$ time.). The space usage is measured in words of $\log n$ bits, unless specified otherwise.

First we describe the linear-space data structure for five-sided three-dimensional range reporting. We explain how  the standard grid approach in Section~\ref{sec:recur}. We prove our result about ``covering'' a shallow cutting by rectangles in Section~\ref{sec:cover} and show how this covering can be used to obtain a compact dominance reporting data structure in Section~\ref{sec:slabdomin}. The same data structure can be easily modified to support two-dimensional range minima queries. Using the same approach, we obtain an  $O(n\log\log n)$-space data structure for range minima and five-sided reporting queries with $O(\log \log n)$ (resp. $O((k+1)\log\log n)$ query time; this result  is described in Section~\ref{sec:minima2}.  The data structure for four-dimensional orthogonal range reporting is presented in Section~\ref{sec:fourdim}.

\section{Recursive Grid}
\label{sec:recur}
We divide the $(x,y)$ grid into $\sqrt{n/\log^3 n}$ vertical slabs and $\sqrt{n/\log^3n}$ horizontal slabs, so that each slab contains $\sqrt{n\log^3 n}$ points.   For every slab we keep a data structure supporting three-dimensional dominance queries, that will be described in Section~\ref{sec:slabdomin}. This data structure uses $O(\log\log n)$ bits per point and answers queries in $O((k+1)\log^{\eps} n)$ time. The top data structure $D^t$ contains the $\log^{1+\eps} n$ points with smallest  $z$-coordinates from every cell. Last, we store a recursively defined data structure for every slab that contains $\Omega(\log^8 n)$ points. When the number of points in a slab does not exceed $O(\log^8 n)$, we keep all points in a data structure that uses $O(\log \log n)$ bits per point and supports queries in $O(1+k)$ time. This data structure can be constructed using standard techniques; see  Section~\ref{sec:rankspace}. 

We observe that an $O(n)$-word data structure supporting five-sided queries in $O(\log^{1+\eps}n+k\log^{\eps}n)$ time is already known~\cite{Chazelle88}. 
In Sections~\ref{sec:recur} - \ref{sec:slabdomin} we describe the data structure for \emph{$\log n$-capped} queries: we report all $k$ points in the query range if $k\le \log n$; if $k> \log n$, we return $NULL$. In the latter case, we can use the "slow" data structure of Chazelle and report all points in the query range in time  $O(\log^{1+\eps}n+k\log^{\eps}n)=O((1+k)\log^{\eps}n)$

A four-sided query (i.e., a 3-d query bounded on four sides) can be answered as follows. If a query range $[a,b]\times [0,h]\times[0,z]$ is entirely contained in one horizontal slab $H$, we answer the query using the data structure for $H$. If a query is contained in one vertical slab $V$, we answer the query using the data structure for $V$. Suppose the query intersects several horizontal slabs and several vertical slabs. Let $H_j$ denote the horizontal slab that contains $h$; let $V_a$ and $V_b$ denote vertical slabs that contains $a$ and $b$ respectively. 
A query is split into four parts, see Fig.~\ref{fig:grid} in Section~\ref{sec:figures}. The central part is aligned with slab boundaries, three other parts are contained in slabs $V_a$, $V_b$, and $H_j$ respectively. Let $a'$ denote the $x$-coordinate of the right boundary of $V_a$ and let $b'$ denote the $x$-coordinate of the left boundary of $V_b$. Let $h'$ denote the $y$-coordinate of the lower boundary of $H_j$. We can report points in $[a',b']\times [0,h']$ using the top data structure.  We ask dominance queries $[0,b]\times [0,h]$ and $[a,+\infty)\times [0,h]$ to slabs $V_a$ and $V_b$ respectively.  We ask a three-sided query $[0,h]\times [a',b']$ to horizontal slab $H_j$. A query to the top data structure and two dominance queries take $O(\log^{\eps}n)$ time. The total query time is $q(n)=O(\log^{\eps}n)+ q(\sqrt{n\log^3n})= O(\log^{\eps}n\log\log n)$.  

A five-sided query $[a,b]\times [h_1,h_2]\times [0,z]$ is processed in a similar way.  If a query is entirely contained in one horizontal or vertical slab, we answer the query using the data structure for that slab.  If  a query intersects several horizontal slabs and several vertical slabs, we split the query range into five parts. Let $V_a$ and $V_b$ denote the vertical slabs that contain $a$ and $b$ respectively; let $H_1$ and $H_2$ denote the horizontal slabs that contain $h_1$ and $h_2$. We answer three-sided queries $[a,+\infty)\times [h_1,h_2]$ and $[0,b]\times [h_1,h_2]$ on slabs $V_a$ and $V_b$ respectively. We answer two other three-sided queries $[a,b]\times [h_1,+\infty)$ and $[a,b]\times [0,h_2]$ on slabs $H_1$ and $H_2$.  The central part of the query can be answered using the top data structure $D^t$. The total query time is dominated by three-sided queries, $q(n)=O(\log^{\eps} n\log\log n)$. We can reduce the time to $O(\log^{\eps} n)$ by replacing $\eps$ with an arbitrary $\eps'< \eps$ in the above construction.

Let $S(n)$ denote the space usage of our structure in bits. The top data structure can be implemented using e.g.~\cite{KarpinskiN09} and requires $O((n/\log^3 n)\log^{2+\eps})=o(n)$ bits.  Dominance data structures use $O(n\log\log n)$ bits because a dominance data structure consumes $O(\log\log n)$ bits per point and each point is kept in two slabs. Hence $S(n)= O(n\log\log n) +2\sqrt{n/\log^3 n}S(\sqrt{n \log^3 n})$. We set $c(n)=S(n)/n$ and divide both parts of the previous equality by $n$. Thus  $c(n)=O(\log\log n) + 2c(\sqrt{n\log^3 n})$. The latter recursion can be resolved to $c(n)=O(\log n)$.  Hence $S(n)=O(n\log n)$ and the data structure  uses $O(n)$ words of $\log n$ bits.   

\begin{theorem}
  \label{lemma:capped5side}
  There exists a data structure that uses $O(n)$ words of space and supports five-sided queries in $O((k+1)\log^{\eps}n)$ time. 
\end{theorem}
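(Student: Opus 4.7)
The plan is to assemble the five-sided structure in two layers that cover different answer sizes. First I would treat the $\log n$-capped problem using the recursive grid already spelled out in this section: partition the plane into $\sqrt{n/\log^3 n}$ vertical and $\sqrt{n/\log^3 n}$ horizontal slabs, attach the $O(\log\log n)$-bits-per-point three-dimensional dominance structure of Section~\ref{sec:slabdomin} to every slab, install the top structure $D^t$ holding the $\log^{1+\eps} n$ smallest-$z$ points per cell, and recurse inside every slab that still contains $\Omega(\log^8 n)$ points; small residual slabs are handled by the rank-space base case of Section~\ref{sec:rankspace}. The bit-complexity then satisfies $S(n)=O(n\log\log n)+2\sqrt{n/\log^3 n}\,S(\sqrt{n\log^3 n})$, which after dividing by $n$ telescopes to $c(n)=O(\log n)$, giving $O(n\log n)$ bits, i.e.\ $O(n)$ words.

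Next I would bound the query time. A five-sided range $[a,b]\times[h_1,h_2]\times[0,z]$ is decomposed at each level into five pieces: three-sided sub-queries on the two vertical slabs containing $a$ and $b$, three-sided sub-queries on the two horizontal slabs containing $h_1$ and $h_2$, and an aligned central block answered by $D^t$. The four slab sub-queries are actually three-sided and each recurses into a single inner slab of size $\sqrt{n\log^3 n}$, while $D^t$ answers in $O(\log^{\eps} n)$. The direct bound is $q(n)=O(\log^{\eps}n)+q(\sqrt{n\log^3n})=O(\log^{\eps}n\log\log n)$, and I would eliminate the $\log\log n$ factor by running the whole construction with a smaller exponent $\eps'<\eps$ chosen so that $\log^{\eps'}n\cdot\log\log n=O(\log^{\eps}n)$. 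Since the capped structure returns at most $\log n$ points, this gives $O((k+1)\log^{\eps}n)$ in the capped regime.

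Finally, when the capped routine signals that $k>\log n$, I would fall back to Chazelle's linear-space five-sided structure, whose cost $O(\log^{1+\eps}n+k\log^{\eps}n)$ is already $O((k+1)\log^{\eps}n)$ once $k\ge\log n$. Running both layers in parallel, charging $O(n)$ words to each, yields the claimed space and time bounds.

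The main obstacle I expect is forcing the space recurrence to telescope: if the slab-level dominance structure cost $O(\log n)$ bits per point instead of $O(\log\log n)$, the additive term at each level would be $\Theta(n\log n)$ bits, the number of recursion levels would push this to $\omega(n)$ words, and the argument would collapse. Securing the $O(\log\log n)$-bit-per-point bound is precisely the job of the shallow-cutting covering lemma of Section~\ref{sec:cover} together with the dominance construction of Section~\ref{sec:slabdomin}; once that is in hand, resolving $c(n)=O(\log\log n)+2c(\sqrt{n\log^3 n})$ and plugging the resulting $O(n)$-word bound together with Chazelle's fallback is routine.
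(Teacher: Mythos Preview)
Your proposal is correct and follows the paper's approach essentially verbatim: the same recursive grid with $\sqrt{n/\log^3 n}$ slabs, the same $O(\log\log n)$-bit-per-point slab dominance structure from Section~\ref{sec:slabdomin}, the same space recurrence $c(n)=O(\log\log n)+2c(\sqrt{n\log^3 n})$ resolving to $O(n)$ words, the same query decomposition with the $\eps'<\eps$ trick, and the same Chazelle fallback for $k>\log n$. The only wording slip is ``decomposed at each level into five pieces'': the five-way split happens only at the top level, after which each three-sided slab sub-query is a four-sided 3D query that spawns a single recursive call per level --- but your recurrence $q(n)=O(\log^{\eps}n)+q(\sqrt{n\log^3 n})$ already reflects this correctly.
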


\section{Covering of a Shallow Cutting}
\label{sec:cover}
Our main tool in designing a compact dominance data structure is shallow cuttings. 
A $t$-shallow cutting for a set $P$ is a collection of $O(n/t)$ cells such that each cell $C$ is a rectangle of the form $[0,a]\times [0,b]\times [0,c]$. Furthermore each cell contains at most $2t$ points from $P$ and every point $q$ in 3-d space that dominates at most $t$ points from $P$ is contained in some cell(s).  The list of all points from $P$ in a cell $C$ is called a conflict list of $C$ and denoted $list(C)$.  For a  cell $C=[0,a]\times [0,b]\times [0,c]$, the point $(a,b,c)$ is called the corner of $C$.

In this section we show that conflict lists of all cells in a $t$-shallow cutting can be almost covered by 3-d boxes unbounded in $z$-direction. The conflict list of each cell is contained in $O(d^3)$ boxes where $d$ is a parameter that does not depend on $t$. There can be a small number of points that is not contained in these $O(d^3)$ boxes. However the total number of such points for all conflict lists is $O(|P|/d)$. 
\begin{theorem}
  \label{theor:shallowpart}
Let $\cC$ denote a $t$-shallow cutting of a three-dimensional point  set $P$, $|P|=m$. For any integer $d>0$ there exists a subset $P'$ of $P$ and a set of three-dimensional rectangles $\cR=\{\,R_1,R_2,\ldots, R_s\,\}$, such that
\begin{description}
\item[(a)] 
  $|P'|=O(m/d)$
\item[(b)]
  Rectangles $R_i$ are unbounded along the $z$-axis.
\item[(c)]
  The conflict list of any cell, except for points from $P'$, is contained in $O(d^3)$ rectangles from $\cR$, \[list(C_i)\cap (P\setminus P')\subseteq (list(C_i)\cap R_{i_1})\cup (list(C_i)\cap R_{i_2})\cup\ldots\cup (list(C_i)\cap R_{i_g}\] for $g=O(d^3)$.
\item[(d)]
  Each rectangle contains $O(t\cdot d^4)$ points of $P$. 
\end{description}
\end{theorem}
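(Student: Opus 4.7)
The plan is to build $\cR$ hierarchically via a coarser shallow cutting. First I would take $\cC'$, a $(td^4)$-shallow cutting of $P$, which has $O(m/(td^4))$ cells each containing $O(td^4)$ points. Since every cell $C \in \cC$ has a corner dominating at most $2t < td^4$ points, that corner lies in some cell $C' \in \cC'$; hence $C \subseteq C'$ and $list(C) \subseteq list(C')$. The problem reduces to producing, for each coarser cell $C'$, a small family of $z$-unbounded rectangles whose 2D footprints together cover $\pi(list(C))$ for every $C \subseteq C'$.

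For each $C' \in \cC'$ I would partition the $O(td^4)$ points of $list(C')$ via a uniform $d \times d$ rank-based grid (by $x$-rank into $d$ vertical sub-slabs, then by $y$-rank into $d$ horizontal sub-strips). The candidate rectangles are (i) the bounding boxes of the $d^2$ grid cells and (ii) one ``prefix'' bounding box for each grid position $(i^*, j^*)$---the union of grid cells strictly above-and-left of $(i^*, j^*)$. For every cell $C \subseteq C'$ whose 2D corner $(a, b)$ falls in grid cell $(i^*, j^*)$, one prefix rectangle covers the ``interior'' portion of $\pi(list(C))$, $O(d)$ grid cells in column $i^*$ cover the right boundary, and $O(d)$ grid cells in row $j^*$ cover the top boundary---$O(d)$ rectangles in a naive count.

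The main obstacle is property (d): the $z$-unbounded extension of a rank-grid rectangle can contain far more than $O(td^4)$ points of $P$, since any point with $(x, y)$ inside the footprint is included regardless of whether its $z$-coordinate exceeds the top of $C'$. I would address this by further refining each ``overflowing'' grid cell and prefix rectangle into sub-rectangles of at most $O(td^4)$ $P$-points, using an $O(d^2)$-way split by the $y$-rank of the resident $P$-points. The refinement multiplies the per-$C$ rectangle count by at most $O(d^2)$, yielding $O(d^3)$ rectangles per cell, while each rectangle contains $O(td^4)$ $P$-points by construction. For those cells whose boundary grid-cell footprints are too dense in $P$ for $O(d^2)$ sub-rectangles to suffice, the offending $P$-points are collected in the exceptional set $P'$; a charging argument based on $|\cC'| = O(m/(td^4))$ and the total $P$-point budget delivers $|P'| = O(m/d)$. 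This establishes properties (a)--(d).
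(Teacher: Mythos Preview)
Your approach via a coarser $(td^4)$-shallow cutting is structurally different from the paper's, and the treatment of property~(d) has a real gap. The prefix rectangles you introduce are anchored at the origin: the $z$-unbounded box $[0,x_{i^*}]\times[0,y_{j^*}]\times(-\infty,\infty)$ can contain $\Theta(m)$ points of $P$, since any point whose $(x,y)$-projection is dominated by $(x_{i^*},y_{j^*})$ is captured regardless of its $z$-coordinate. An $O(d^2)$-way refinement reduces the per-piece count only by a factor $O(d^2)$, so it cannot bring a $\Theta(m)$-point rectangle down to $O(td^4)$ unless $m=O(td^6)$. The same issue hits the individual grid cells: their boundaries are rank boundaries within $list(C')$, not within $P$, so the $z$-unbounded extension of a grid cell can also contain arbitrarily many $P$-points.

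Your fallback---pushing the offending points into $P'$---does not come with a workable charging argument. To charge the at most $O(td^3)$ points of $list(C')$ in an over-dense grid cell $G$ against the $\ge td^6$ points of $P$ in $G\times\mathbb{R}$, you would need every point of $P$ to lie in only $O(d^2)$ such $G$'s across all cells $C'\in\cC'$. But the $2$D projections of the cells $C'$ are all anchored at the origin and overlap heavily, so a single point of $P$ can sit in one grid cell for each of the $\Theta(m/(td^4))$ coarser cells; the charge then scales like $m^2/(td^7)$, not $m/d$. The paper sidesteps exactly this difficulty by building $\cR$ from the layers-of-maxima staircases of the corner set: each rectangle lies in a thin band between staircases $M_i$ and $M_{i+d}$, which guarantees that every planar point is covered by at most $d$ rectangles of $\cR$. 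That bounded-overlap property is what makes the charging for the exceptional set go through, and it is the ingredient missing from your construction.
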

\begin{proof}
 We represent $P'$ as a union of three sets $P'=P_1\cup P_2\cup P_3$.  $P_1$ is the subset containing all points from $P$ that are stored in conflict lists of at least $d$ different cells for a parameter $d$. The number of points in $P_1$ is at most $O(n/d)$.  Next we construct the set $P_2$ and the set of rectangles $\cR$, so that conditions (b) and (c) are satisfied. Finally we will remove some rectangles from $\cR$ and construct $P_3$, so that condition (d) is satisfied. For simplicity we sometimes do not distinguish between a rectangle or a point and its projection on the $xy$-plane. 

 \paragraph{Staircases, Regions, and Neighborhoods}
 Consider the corners $c_i$ of all cells $C_i$ in a shallow cutting.
 We can assume that no $c_i$ is dominated by another corner $c_j$ (if this is the case, $C_i$ is contained in $C_j$ and we can remove the cell $C_i$ from the shallow cutting).
 
Let $M$ be the set of projections of corners onto the $xy$-plane. We decompose $M$ into maximal layers (layers of maxima) $M_i$: $M_1$ is the set of maximal points\footnote{A point $p$ in a set $M$ is maximal if $p$ is not dominated by any other point in $M$.} of $M$. $M_2$ is the set of maximal points of $M\setminus M_1$ and $M_i$ for $i> 2$  is the set of maximal points in $M\setminus (\cup_{j=1}^{i-1} M_j)$. Thus every point on $M_i$ is dominated by some point on $M_{i-1}$ and no point on $M_i$ is dominated by another point on $M_i$. We connect points of $M_i$ by alternating horizontal and vertical segments; the resulting polyline will be called the \emph{staircase} of $M_i$. See Fig.~\ref{fig:shcut-cover} in Section~\ref{sec:figures}.

We visit corners of $\cC$ on a layer $M_i$ in the decreasing order of their $z$-coordinates.
Let $c_j$ denote the $j$-th visited corner on $M_i$.  We shoot a horizontal ray in $-x$ direction from $\pi(c_j)$ until it hits either a ray of a previously visited corner $c_l$, $l<j$, or the staircase of $M_{i+d}$.  We also shoot a vertical ray from $\pi(c_j)$ until
it hits either a ray of a previously visited corner $c_{l'}$, $l'<j$, or the staircase of $M_{i+d}$. We will call the polygon bounded by two ray from $\pi(c_j)$,  the rays from $c_l$ and $c_{l'}$ and a portion of the staircase $M_{i+d}$ the \emph{region} of $c_j$.  

If the region of $c_j$ contains at least $d^2$ corners of $M_j$ for all  $j$ such that $i<j \le i+d$, we add all points of $list(C_j)$ to $P_2$ and say that the region of $c_j$ is empty. Otherwise we divide the region of $c_j$ into at most $d$ rectangles, called rectangles of $c_j$ (or rectangles associated to $c_j$).  See Fig.~\ref{fig:shcut-cover} in Section~\ref{sec:figures}. The set of rectangles $\cR$ consists of all rectangles associated to corners of $\cC$.

The \emph{neighborhood} of a corner $c_j$ is defined as follows. We shoot a horizontal ray in $-x$ direction until we either (i) hit the boundary of an empty region or (ii) cross the boundaries of $d$ non-empty regions or (iii) hit the staircase of $M_{i+d}$. We also shoot a vertical ray in $-y$ direction until we either (i) hit the boundary of an empty region or (ii) cross the boundaries of $d$ non-empty regions or (iii) hit the staircase of $M_{i+d}$.  We call all rectangles whose boundaries are crossed by the vertical and the horizontal rays from $c_j$ the \emph{neighbors} of $c_j$.

\begin{lemma}
  There are at most $O(d^3)$ rectangles associated to  neighbors of a corner $c_j$.
\end{lemma}
\begin{proof}
  Each corner has at most $2d$ neighbor regions. Every region is divided into $O(d^2)$ rectangles. 
\end{proof}

\begin{lemma}
  \label{lemma:cov1}
  For any point $p$ in $list(C_j)$, either $p\in P_1\cup P_2$ or $p$ is contained in some rectangle $R_{j_f}$ associated to a neighbor of $c_j$. 
\end{lemma}
We can bound the number of points in conflict lists of corners with empty regions.
\begin{lemma}
  \label{lemma:p2}
  The number of points in a set $P_2$ is bounded by $O(m/d)$. 
\end{lemma}
Proofs of Lemma~\ref{lemma:cov1} and Lemma~\ref{lemma:p2} can be found in Section~\ref{sec:coverlemmas}.
\paragraph{Constructing the set $P_3$.}
In the final step, we exclude rectangles that contain too many points from $P$ from the set $\cR$.
Let $list(\cC)=\cup_{C_j\in \cC}list(C_j)$ denote all points from $P$ contained in the cells of $\cC$.  For any rectangle $R\in\cR$, $|R\cap list(\cC)|\le t\cdot d^2$: a rectangle $R$ is a subset of the non-empty region associated to  some corner $c_j$. Hence every rectangle intersects projections of at most $d^2$ different cells.  Therefore $R\cap list(\cC)\le t\cdot d^2$ for all $R\in \cR$.

If $|R\cap P|\ge t\cdot d^4$ for some rectangle $R\in \cR$, we add all points of $R\cap list(\cC)$ to $P_3$ and remove the rectangle $R$ from $\cR$. We can show that $|P_3|=O(m/d)$:   Similar to Lemma~\ref{lemma:p2}, we assign $d$ dollars to each point of $P$ and assume that the cost of inserting a point into $P_3$ is $d^2$ dollars. If points from $R\cap list(\cC)$ are added to $P_3$, then we charge the cost to $P\cap R$. The cost is evenly distributed among all points of $P\cap R$. Since $|P\cap R|\ge d^4\cdot t$ and $|list(\cC)\cap R|\le d^2\cdot t$,  we charge no more than  $1$ dollar to each point. 

All rectangles associated to corners of a fixed maximal layer $M_i$ are disjoint.  A planar point between the staircase of $M_i$ and the staircase of $M_{i+1}$ can be covered by at most one rectangle associated to some corner of $M_j$ for every $j$, $i-d< j\le i$. Therefore $\pi(p)$ for any point $p\in P$ is contained in at most $d$ rectangles from $\cR$. 
Thus each point of $P$ is charged at most $d$ times. Hence the total cost of creating $P_3$ is $O(m\cdot d)$ dollars, where $m$ is the total number of points in $P$. Since the cost of inserting a point into $P_3$ is $d^2$,  $P_3$ contains $O(m/d)$ points.

Summing up, $P'=P_1\cup P_2\cup P_3$ contains $O(m/d)$ points. For every cell $C_j$ of $\cC$, $list(C_j)\cap (P\setminus P')$ is covered by $O(d^3)$ rectangles from $\cR$ and every $R\in \cR$ contains $O(t\cdot d^4)$ points of $P$. 
\end{proof}

\section{Dominance Queries in a Slab}
\label{sec:slabdomin}
Now we describe the compact data structure that supports capped dominance range reporting queries.  By a slight misuse of notation, in this Section  $P$ will denote the set of points in a slab.

Let $P$ denote the set of points in a slab $u$ and let $m=|P|$. We  construct a $t$-shallow cutting with $t=\log^6 n$ and  apply Theorem~\ref{theor:shallowpart} with $d=\log n$; the subset $P'\subset P$ and the set of rectangles $\cR$ are as defined in Theorem~\ref{theor:shallowpart}.  We  keep  $P'$ in a data structure from~\cite{Chan13} that uses $O(\log n)$ bits per point and answers queries in $O(\log \log n + k)$ time. Let $P_j=list(C_j)\setminus P'$ denote the set of points in the conflict list of the cell $C_j$ that are not in $P'$.  We construct a dominance data structure for points in $P_j$. Points in $P_j$ are reduced to the rank space, so that the cell  data structure uses $O(\log t)=O(\log \log n)$ bits per point.  A $(\log n)$-capped dominance query is answered as follows: we find the cell $C$ that contains $q$, reduce $q$ to the rank space of $C$, and report all points in $P_j$ that are dominated by $q$. We also query the data structure for $P'$ and report all points in $P'$ that are dominated   by $q$.  If $q$ is not contained in any cell of the $t$-shallow cutting, then $q$ dominates at least $t=\log^5 n$ points and we can return $NULL$.

It remains to show how the points can be "decoded", i.e., how to obtain the coordinates of a point from its coordinates in the rank space of $P_j$.  We also need to show how the query point $q$ can be transformed into the rank space.  For each cell $C_j$ of the shallow cutting we keep the list $rlist(C_j)$ of rectangles $R_{i_1}$, $R_{i_2}$, $\ldots$, such that $list(C_j)\setminus P'$ is contained in these rectangles; for every rectangle we store the global coordinates of its endpoints.  By Theorem~\ref{theor:shallowpart} $rlist(C_j)$ consists of $O(d^3)$ rectangles.  We can identify a point $p$ in $list(C_j)\setminus P'$  by specifying the rectangle $R_{i_p}$ that contains $p$ and its $x$-rank in the rectangle $R_{i_p}$ (the $x$-rank of a point $p$ in a rectangle $R$ is the number of points in $R$ to the left of $p$). Since the list $rlist(C_j)$ consists of $O(d^3)$ rectangles, the rectangle $R_{i_p}$ can be identified by its position in $rlist$ using  $O(\log d)$ bits.  Each rectangle $R\in \cR$ contains of $O(t\cdot d^4)$ points. Hence we can store the $x$-rank of $p$ in $R_{i_p}$ using $O(\log t + \log d)$ bits. Thus $p$ is represented using $O(\log t+ \log d)=O(\log \log n)$ bits.    Hence every rectangle $R\in \cR$ contains a poly-logarithmic number of points from the global set of points. 

Every rectangle $R\in \cR$ is unbounded along the $z$-axis.
Hence we can retrieve the point coordinates by answering a special kind of two-dimensional queries, further called \emph{capped range selection} queries.   A capped range selection query $(Q,v)$ for a two-dimensional rectangle $Q=[a,b]\times [c,d]$ and an integer $v\le \log^{10} n$ returns the point with $x$-rank $v$ in the rectangle $Q$, i.e., the $v$-th leftmost point in $Q$. Let $P_{glob}$ denote the global set of points.  Slabs are created by dividing the structure on previous recursion level along the $x$- or $y$-axis. Therefore $P_{glob}\cap R= P\cap R$ for every $R\in \cR$.  Hence 
we need only one instance of the capped range selection data structure for all recursive slabs.

Capped range selection can be viewed as a generalization of the range successor problem~\cite{NN12} and can be solved using compact range trees and some auxiliary data structures.  We will show in Section~\ref{sec:smallsel} that capped selection queries can be answered in $O(\log^{\eps}n)$ time using an $O(n)$ space data structure. 
However, we have to decompose each rectangle from $\cR$ into $O(\log^{1+\eps} n)$ parts.
Therefore we need $O(\log^{2+\eps} n)$ additional bits per rectangle. A detailed description is provided in Lemma~\ref{lemma:smallsel} in Section~\ref{sec:smallsel}.

Using Lemma~\ref{lemma:smallsel}, we can retrieve the coordinates of any point $p$ in $C_j$  in $O(\log^{\eps} n)$ time: we know the rectangle $R\in \cR$ that contains  $p$ and we know the $x$-rank $\ell$ of  $p$ in $R$. Hence we can return the  global coordinates of $p$ in $O(\log^{\eps} n)$ time by answering a query $(R,\ell)$. If we can retrieve a point from cell $C$ in time $O(\log^{\eps} n)$, we can also reduce the query $q$ to the rank space of $C$ within the same time; see Section~\ref{sec:rankspace}.

The conflict lists of all cells contain $O(m)$ points. Data structures for all cells of the shallow cutting use $O(m\log\log n)$ bits. For each cell we also store the list of rectangles and spend $O(\log^{2+\eps} n)$ bits per rectangle (for the capped selection data structure). In total we need $O(d^3\log^{2+\eps} n)=o(t)$ bits for every cell and $O(m)$ bits for all cells. The data structure for the subset $P'$ uses $O((m/\log n)\log n)=O(m)$ bits.  To identify a cell of the shallow cutting that contains the query point, we need a data structure that supports point location queries on a planar orthogonal subdivision of size $O(m/t)$~\cite{Chan13}. This data structure uses $O(m/\log^5n)$ bits. Hence our dominance data structure uses $O(m\log \log n)$ bits.  We need only one instance of range selection data structure for all slabs; in this section we use the variant that consumes $O(n\log n)$ bits of space. Hence the total space usage of the global data structure is not affected. 

We also use range selection to access points stored in the slabs of size $O(\log^8 n)$ of the grid structure. The number of points in every slab that is not divided further is $\Omega(\log^4 n)$. Since every point is stored in $O(\log n/\log\log n)$ recursive structures, the number of slabs is bounded by $O(n/\log^3 n)$. Since the range selection data structure requires  $O(\log^{2+\eps} n$ bits per rectangle (i.e., per slab),  the space usage of the range selection data structure is $O(n)$.

\paragraph{Two-Dimensional Range Minima Queries.}
Our data structure for five-sided queries can be easily modified to support range minima queries. We can adjust the slab dominance data structure, described in this section, so that it supports 2-d dominance minima queries. We use the same shallow cutting, but construct a data structure supporting 2-d dominance minimum queries for every cell.  We replace the top data structure with a data structure from~\cite{ChanLP11}.  This data structure contains
$O(n/\log n)$ points and uses $O(n)$ space (on the top recursion level). 
We can decompose a  four-sided query to $O(\log\log n)$ dominance queries and $O(\log\log n)$ minima queries on top data structures as described in Section~\ref{sec:recur}. Thus the answer to a (four-sided) range minima query in 2-d is the minimum of the answers to $O(\log\log n)$ local queries. The total query time $O(\log^{\eps}n\log\log n)$. We can get rid of the $\log\log n$ term by replacing $\eps$ with an arbitrary $\eps'< \eps$.

\section{Four-Dimensional Dominance Range Reporting}
\label{sec:fourdim}
In this section we describe a data structure that answers dominance range reporting queries in $O(\log n/\log \log n+k)$ time. We start by explaining why previous methods require $\Omega(\log n)$ time. Then
we describe the main idea of our approach and show how it can be used to reduce the query time by a factor $O(\sqrt{\log\log n})$. Then we will present a complete solution.

\paragraph{Range Trees.}
Four-dimensional queries are reduced to three-dimensional queries using a range tree.  A range tree $T$ stores the points of the input set $P$ sorted by their fourth coordinate. In every internal node $u$ of the range tree, we keep a set of points $S(u)$; $S(u)$ contains all points stored in the leaf descendants of $u$.  In every node $u$ we  keep a  data structure supporting  three-dimensional  dominance queries. Let $\gamma$ denote the node degree of the range tree. For any interval $[a,b]$, we can identify $O(\gamma \cdot \log_{\gamma} n)$ nodes $u_i$ of the range tree, such that $p.z'\in [a,b]$ if and only if $p\in S(u_i)$. Furthermore nodes $u_i$ can be divided into $O(\log_{\gamma}n)$ groups, such that nodes in the same group are siblings.

A three-dimensional dominance query can be answered by locating  a point in a planar orthogonal subdivision~\cite{Nekrich07,Afshani08,Chan13}.  Orthogonal point location queries can be answered in $O(\log \log n)$ time~\cite{Chan13}. If the range tree $T$ is a binary tree,  we can answer four-dimensional dominance queries in $O(\log n \log \log n + k)$ time.  By increasing the node degree $\gamma$ to $O(\log^{\eps}n)$, we can reduce the query time to $O(\log n + k)$. 
Unfortunately it appears that further improvement in query time is not possible with this approach:  $O(\log \log n)$ query time is optimal for both planar orthogonal point location and   three-dimensional reporting queries.  This lower bound is valid for any data structure with pseudo-linear space usage and follows from the lower bound for the predecessor problem~\cite{BeameF02,PatrascuT06}. Increasing the node degree is also not feasible: every point must be stored in $\Omega(\gamma^2)$ three-dimensional data structures. Hence if $\gamma=\log^{\omega(1)}n$,  the total space usage would be  prohibitively  high.

\paragraph{Better Query Time.}
In order to improve the query time we store additional information for selected nodes in the range tree. Our range tree $T$ has node degree $\gamma=\log^{\eps}n$. We embed a tree $T^0$ with node degree $\rho_0=\gamma^{\alpha_0}$ where $\alpha_0=\log \log n$ into $T$. Nodes of $T^0$, further called $0$-nodes, correspond to nodes of $T$ with depth that is divisible by $\log \log n$. For every range $1\le f\le l \le \rho_0$ we define $S(u,f,l)=\cup_{j=f}^l S(u_j)$ where $u_j$ denotes the $j$-th child of a node $u$. Sets $S(u,\cdot,\cdot)$ will be called the node ranges of $u$. Let $t_0=\rho_0^4$. We construct a $2t_0$-shallow cutting $\cC(u,i,j)$ for each $S(u,i,j)$ and for every internal node with height at least $2$ in the tree $T^0$. We also construct a $2t_0$-shallow cutting $\cE(v,i,j)$ for each set $S(v,i,j)=\cup_{l=i}^j S(v_l)$ where $v$ is a node in $T$ and $v_l$ is the $l$-th child of $v$.  Finally, we construct a $t_h$-shallow cutting $\cC'(E_j)$ for each cell $E_j$ of $\cE(v,i,j)$, where $t_h=\gamma^{O(1)}$. 

The set $S(u,i,j)$ of a $0$-node $u$ can be represented as a union of $O(\log \log n)$ non-overlapping sets $S(v_f,i_f,j_f)$ where each node $v_f$ is "between" the node $u$ and its children in $T^0$. In other words,each $v_f$ is a  descendant of $u$ and an ancestor of at least one $u_l$, $i\le l\le j$. We will say that sets $S(v_f,i_f,j_f)$ are a \emph{canonical decomposition} of $S(u,i,j)$. 
\begin{lemma}[\cite{Nekrich20}]
  \label{lemma:contain}
Let $\cA$ be an $f$-shallow cutting for a set $S$ and let $\cB$ be an $(f')$-shallow cutting for a set $S'\subseteq S$ so that $f'\ge 2f$. Every cell $A_i$ of $\cA$ is contained in some cell $B_j$ of $\cB$.   
\end{lemma}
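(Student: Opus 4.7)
The plan is to exploit the fact that cells of a shallow cutting are all axis-aligned boxes anchored at the origin, so containment of one cell in another reduces to membership of a single point (the corner) in the larger cell. I will take an arbitrary cell $A_i \in \cA$ with corner $(a,b,c)$, so that $A_i = [0,a]\times[0,b]\times[0,c]$, and argue that its corner lies in some cell of $\cB$.

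First I would count how many points of $S'$ the corner $(a,b,c)$ dominates. Because $A_i$ is a cell of an $f$-shallow cutting for $S$, by definition $|A_i \cap S| \le 2f$. Since $S' \subseteq S$, we also have $|A_i \cap S'| \le 2f$. The key observation is that a point $p$ with non-negative coordinates is dominated by $(a,b,c)$ if and only if $p \in [0,a]\times[0,b]\times[0,c] = A_i$, so $(a,b,c)$ dominates exactly $|A_i \cap S'| \le 2f \le f'$ points of $S'$ (using the hypothesis $f' \ge 2f$).

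Now I invoke the defining property of $\cB$ as an $f'$-shallow cutting for $S'$: every point of $\mathbb{R}^3$ dominating at most $f'$ points of $S'$ is contained in some cell of $\cB$. Applying this to the corner $(a,b,c)$ yields a cell $B_j = [0,a']\times[0,b']\times[0,c'] \in \cB$ with $(a,b,c) \in B_j$, i.e. $a \le a'$, $b \le b' $, $c \le c'$. Hence
\[
A_i \;=\; [0,a]\times[0,b]\times[0,c] \;\subseteq\; [0,a']\times[0,b']\times[0,c'] \;=\; B_j,
\]
which is exactly the claim. There is no real obstacle here: the argument is essentially a bookkeeping one, and the only step that uses the hypothesis $f' \ge 2f$ is the inequality $|A_i \cap S'| \le 2f \le f'$, which is precisely what is needed to apply the covering property of $\cB$ at the corner of $A_i$. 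The only minor subtlety I would double-check is the boundary convention for dominance (strict vs. non-strict), but since cells are closed and the shallow-cutting definition uses non-strict dominance consistently, this does not affect the argument.
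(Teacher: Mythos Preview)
Your proof is correct and is precisely the standard argument: reduce containment of anchored boxes to membership of the corner, bound the number of $S'$-points dominated by that corner via $|A_i\cap S|\le 2f$ and $S'\subseteq S$, then apply the covering property of the $f'$-shallow cutting. The paper itself does not prove this lemma---it simply cites it from~\cite{Nekrich20}---so there is nothing to compare against; your write-up would serve perfectly well as an inline proof.
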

Consider a set $S(u,i,j)$ and its canonical decomposition $S(u,i,j)=\cup_f S(v_f,i_f,j_f)$.  By Lemma~\ref{lemma:contain}, a cell of $C_j$ of $\cC(u,i,j)$ is contained in some cell  $E_{j_f}$ of $\cE(v_f,i_f,j_f)$. For each cell $C_l$ of a shallow cutting $\cC(u,i,j)$  and for every set in the canonical decomposition of $S(u,i,j)$   we store a pointer to a cell $E_{l_f}\in \cE(v_f,i_f,j_f)$ containing $C_l$.

\begin{lemma}[\cite{Chan13}]
  \label{lemma:domin3d-fast}
  There exists a data structure that answers point location queries in a two-dimensional orthogonal subdivision of a $U\times U$ grid by  $O(m)$ rectangles in time $O(\min(\sqrt{\log_{\gamma} m},\log\log_{\gamma} U))$.   
\end{lemma}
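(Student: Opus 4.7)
The plan is to reduce orthogonal point location to a persistent $y$-predecessor problem and solve the latter with a structure whose node operations exploit word-level parallelism tuned to $\log \gamma$-bit fields.

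First I would perform the standard sweep-line reduction. Sweep a vertical line from left to right and maintain, at each $x$-position, the set of horizontal edges of the subdivision currently crossed by the line, sorted by $y$. The $O(m)$ events are the left and right endpoints of the rectangle edges; each is an insertion or deletion. A point location query $(x_0,y_0)$ becomes: at ``time'' $x_0$, find the edge with the largest $y$-coordinate at most $y_0$; together with its successor this identifies the containing rectangle in $O(1)$ extra time.

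Next I would build a persistent B-tree of branching factor $\Theta(\gamma)$ on the $y$-coordinates. Each internal node stores up to $\gamma$ pivots; after reduction to rank space the pivots fit into $O(\gamma \log m)$ bits and a precomputed lookup table of size $n^{O(\eps)}$ performs a parallel comparison, returning the predecessor inside a node in $O(1)$ time. The tree has depth $O(\log_\gamma m)$, so a raw search through it costs $O(\log_\gamma m)$. To improve this to $O(\sqrt{\log_\gamma m})$ I would apply the \Patrascu--Thorup predecessor tradeoff on top of the $\gamma$-packed representation: viewed as a predecessor problem over $m$ integers with effective ``word size'' $\Theta(\log \gamma)$ after packing, the P-T upper bound $O(\sqrt{\log n/\log w})$ evaluates to $O(\sqrt{\log_\gamma m})$. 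Independently, for the $O(\log\log_\gamma U)$ branch of the $\min$ I would overlay an $x$-fast-trie-style structure with branching factor $\gamma$ on the $y$-universe $[1,U]$: the trie has height $O(\log_\gamma U)$, and exponential search over heights plus hashing yields $O(\log\log_\gamma U)$ time per predecessor query. At query time we run whichever of the two subsearches is faster for the given $m,U,\gamma$, producing the claimed $\min$.

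The main obstacle is maintaining full persistence within $O(m)$ words of space. Naive path-copying would multiply space by the tree depth, which is already too much for either bound. The fix is to use DSST fat-node persistence on the B-tree: each node carries a bounded change log that is flushed by node-copying only when full, and since a B-tree update touches only $O(1)$ amortized internal nodes the total number of fat-node modifications over the sweep is $O(m)$. Combined with rank reduction on $y$-coordinates (which caps the universe at $m$ for the $\sqrt{\log_\gamma m}$ branch and leaves $U$ only where it is truly needed for the $\log\log_\gamma U$ branch), this yields the claimed $O(m)$ space together with the two-branch time bound.
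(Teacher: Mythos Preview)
The paper's proof is essentially a two-line black-box argument, quite different from your sweep construction. It first answers a \emph{static} predecessor query on the $O(m)$ distinct rectangle coordinates to reduce to an $O(m)\times O(m)$ grid; this costs $O(\sqrt{\log_\gamma m})$ by the standard fusion-tree/\Patrascu--Thorup bound (recall $\gamma=\log^{\Theta(1)}n$ here, so $\log\gamma=\Theta(\log w)$). It then invokes Chan's orthogonal point-location result on that small grid in $O(\log\log_\gamma m)\le O(\sqrt{\log_\gamma m})$ time. The $O(\log\log_\gamma U)$ branch of the $\min$ is simply Chan's result applied to the original $U\times U$ grid without the reduction. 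Crucially, the predecessor step acts on a single fixed coordinate set, so no persistence is needed anywhere.

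Your route instead pushes all the work into a persistent $y$-predecessor structure, and that is where the gap lies. DSST fat-node persistence handles the plain $\gamma$-ary B-tree and yields linear space with $O(\log_\gamma m)$ queries, but it does not transparently carry the fusion-tree packing or the x-fast-trie hashing you need for the two fast branches: copying a packed node touches $\Theta(\gamma)$ keys, and the version search inside a fat node is itself a predecessor subproblem you must solve in $O(1)$ time. The sentence ``apply the \Patrascu--Thorup tradeoff on top of the $\gamma$-packed representation'' is doing a lot of unjustified work---that tradeoff is a specific recursive construction, not a post-hoc speedup layered on an existing B-tree, and making it persistent in $O(m)$ space is precisely the technical content of the cited Chan paper. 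The paper sidesteps all of this by separating static rank reduction from the black-box grid point-location result.
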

\begin{proof}
  Using the predecessor data structure, we can reduce the point location problem to the special case when point coordinates are bounded by $O(m)$. Using the result of Chan~\cite{Chan13} we can answer point location queries on an $O(m)\times O(m)$ grid in $O(\log\log_{\gamma} m)$ time\footnote{The data structure described in~\cite{Chan13} supports queries in $O(\log\log m)$ time. A straightforward extension of this data structure supports queries in $O(\log\log_{\gamma} m)$ time; see e.g.,\cite{ChanNRT18}.} Predecessor queries can be answered in time $O(\sqrt{\log_{\gamma} m})$. Hence the total query time is $O(\sqrt{\log_{\gamma} m}+\log\log_{\gamma} m)=O(\sqrt{\log_{\gamma} m})$.  
\end{proof}

Now a query can be answered as follows. Suppose that we must report all points dominated by $q=(q_x,q_y,q_z,q_{z'})$. We visit the nodes $u$ of $T^0$ on the path from the root to $q_{z'}$. In every visited node $u$ we identify the canonical set $S(u,i,j)$ and find the cell $C_l$ of the shallow cutting $\cC(u,i,j)$ that contains $q'=(q_x,q_y,q_z)$.  See Fig.~\ref{fig:rangetree4d} for an example. Consider the canonical decomposition  $S(u,i,j)=\cup_f S(v_f,i_f,j_f)$. For every $f$, we visit the cell $E_{l_f}$ of $\cE(v_f,i_f,j_f)$ that contains $C_l$.  We identify the cell $C'_f$ of  the shallow cutting $\cC'(E_{l_f})$  that contains $q$. Every cell of $\cE(v_f,i_f,j_f)$ contains  $\gamma^{O(\log \log n)}$ points. Hence, we can answer a point location query and find the cell $C'_f$ in $O(\sqrt{\log \log n})$ time by Lemma~\ref{lemma:domin3d-fast}. Since $C'_f$ contains $O(\log^2 n)$ points, we can answer a three-dimensional query on $C'_f$ in $O(1+k)$ time; see Section~\ref{sec:rankspace}, Lemma~\ref{lemma:small0}. If $q'$ is not contained in any cell of $C'$, then the query range contains  $k\ge \log^2 n$ points and we can answer the query using some of the previously known data structures in time $O(\log n+ k)=O(k)$.

Our procedure visits $O(\frac{\log n}{(\log \log n)^2})$ nodes of $T^0$ and $O(\frac{\log n}{\log\log n})$ nodes of $T$. 
We spend $O(\log \log n)$ time in every visited node of $T^0$ and $O(\sqrt{\log\log n})$ time in every visited node of $T$.
Hence the total query time is $O(\log n/\sqrt{\log\log n})$. 

\paragraph{Optimal Query Time.}
In order to further improve the query time, we embed a sequence of subtrees $T^i$ into $T$. Node degrees of these subtrees decrease exponentially. As above, we keep shallow cuttings for node ranges in every tree. A shallow cutting in a node range  of $T^i$ provides us with a hint (via Lemma~\ref{lemma:contain}) that speeds up the search in $T^{i+1}$.

Let $\alpha_0=\log \log n$ and $\alpha_i=(\alpha_{i-1})^{1/2}\log^2 \alpha_{i-1}$.  If $u$ is a node in $T^i$ and $v$ is its child in $T^i$, then $\alpha_i$ is the distance between $u$ and $v$ in $T$. Thus every node of $T^i$ corresponds to subtree of height $\alpha_i$ in $T$.  We set $\rho_i=\gamma^{\alpha_i}$ and $t_i=(\rho_i)^4$. We choose $h$ so that $\alpha_h=\Theta(1)$.  To avoid clumsy notation, we assume that $\alpha_i$ divides $\alpha_{i-1}$ for $i\ge 1$ and $\alpha_0$ divides the  height of $T$.  Nodes of $T^i$ will be called $i$-nodes.

For every node $u$ of $T^0$, $0\le i\ge h$, and for every range of children $S(u,l,r)=\cup_{j=l}^r S(u_j)$, we construct a $t_i$-shallow cutting $\cC'(u,l,r)$.
For every node $u$ of $T^i$, $1\le i\le h$, and for every range of children $S(u,l,r)$, we construct a $2t_{i-1}$-shallow cutting $\cE(u,l,r)$. For each cell $E$ of $\cE(u,l,r)$ we construct a $t_i$-shallow cutting $\cC'(E)$.  Consider a canonical decomposition of $S(u,l,r)$ into $S(v_f,l_f,r_f)$,  $S(u,l,r)=\cup S(v_f,l_f,r_f)$, where $u$ is a node of $T^0$ and $v_f$ are nodes of $T^1$.    For each cell $C'$ of $\cC'(u,l,r)$ and for every set $S(v_f,l_f,r_f)$ in the canonical decomposition,  we keep the pointer to a cell $E_f$ of $\cE(v_f,l_f,r_f)$ such that $E_f$ contains $C$.     Consider a canonical decomposition of $S(u,l,r)$ into $S(v_f,l_f,r_f)$,  $S(u,l,r)=\cup_f S(v_f,l_f,r_f)$, where $u$ is a node of $T^i$ for some $i$, $1\le i< h$ and $v_f$ are nodes of $T^{i+1}$. For each cell $C'$ of $\cC'(E)$, where $E$ is a cell of $\cE(u,l,r)$,  and for every set $S(v_f,l_f,r_f)$ in the canonical decomposition,  we keep the pointer to a cell $E_f$ of $\cE(v_f,l_f,r_f)$ such that $E_f$ contains $C'$.

Consider a canonical decomposition of $S(u,l,r)$ into $S(v_f,l_f,r_f)$,  $S(u,l,r)=\cup_f S(v_f,l_f,r_f)$, where $u$ is a node of $T^h$.
For each cell $C'$ of $\cC'(E)$, where $E$ is a cell of $\cE(u,l,r)$,  and for every set $S(v_f,l_f,r_f)$ in the canonical decomposition of $S(u,l,r)$,  we keep the pointer to a cell $C_f$ of $\cC(v_f,l_f,r_f)$ such that $C_f$ contains $C'$. We remark that the nodes in the canonical decomposition of $S(u,l,r)$ are the nodes of $T$. We can adjust the constant $\gamma=\log^{\eps}n$ in such way, that $t_h=\log^3 n/2$. Hence, by Lemma~\ref{lemma:contain}, each cell $C'$ is contained in some $C_f$. 

A query is answered as follows.  The set $\pi^i$ consists of all $i$-nodes $u$ on the path from the root  to $q_{z'}$, such that the height of $u$ is at least $\ell=3\log\log n$. A query is processed in $h+1$ stages. During stage $i$ we visit nodes on $\pi^i$; for every node find the cell of $\cC'(E)$ that contains $q'=(q_x,q_y,q_z)$.   \\
{\tt Stage 0.}
We visit nodes of $T^0$ on the path $\pi^0$. In every visited node $u$ we find the canonical set $S(u,l,r)$ in the canonical decomposition of $[0,q_{z'}]$. Next we find the cell $C$ of $\cC'(u,l,r)$ that contains $q$. For every set $S(v_f,l_f,r_f)$ in the canonical decomposition of $S(u,l,r)$, we visit the cell $E$  of $\cE(v_f,l_f,r_f)$ that contains $C$. The we locate the cell $C'$ of $\cC'(E)$ that contains $q$.\\
{\tt Stage $i$, $1\le i\le h$.}
Suppose that we already know the cell $C'$ of $\cC'(E)$ that contains $q'$ in  every $i$-node $u$ on $\pi^i$ for some $i$, $1\le i< h$. For every node range $S(u,l,r)$ we consider its decomposition into $(i+1)$-nodes $S(u,l,r)=\cup S(v_f,l_f,r_f)$. For each $\cE(v_f,l_f,r_f)$ we  visit the  cell $E_f$ that contains $C'$; then we locate the cell $C'_f$ of  $\cC'(E_f)$ that contains $q$. \\
{\tt Final Step.}
Suppose that we already know the cell $C'$ of $\cC'(E)$ that contains $q'$ in  every $h$-node $u$ on $\pi^h$. For every node range $S(u,l,r)$ we consider its decomposition into $(i+1)$-nodes $S(u,l,r)=\cup S(v_f,l_f,r_f)$. For each $\cC(v_f,l_f,r_f)$ we  visit the  cell $C_f\in \cC(v_f,l_f,r_f)$ that contains $C'$. Finally we report all points in $list(C_f)$ that are dominated by $q'$.

Now we analyze the query time. 
To simplify the notation let  $\lambda(n)=\log n/\log \log n$ and let $t_0=n$. Our method visits $O(\lambda(n)/\alpha_i)$ $i$-nodes. The time spent in a visited node is dominated by the time needed to answer a point location query on a $O(t_{i-1})$ rectangles. Using the result of Chan~\cite{Chan13} we spend $O(\log \log n)$ time in each $0$-node and $O(\lambda(n))$ time in all $0$-nodes.  By Lemma~\ref{lemma:domin3d-fast}, we spend $O(\sqrt{\alpha_{i-1}})$ time in every $i$-node where $1\le i\le h$. Hence the total time in all $i$-nodes is $O(\lambda(n)\sqrt{\alpha_{i-1}}/\alpha_{i})=O(\lambda(n)/\log^2(\alpha_{i-1}))$. We can show that  $\sum_{i=0}^h \frac{1}{\log^2(\alpha_i)}=O(1)$; see Section~\ref{sec:analysis4d}. Hence, the total time that we need to locate $q'$ in the shallow cuttings of all relevant nodes is $O(\lambda(n))$.  A three-dimensional query on a cell $C_f$ of a $t$-shallow cutting takes time $O(1)$ (ignoring the time to report points). 

One technicality still needs to be addressed. We must consider the nodes $u$ on the path from the root to $q'_z$, such that the height of $u$ is less than $\ell$. We answer a three-dimensional query in every such node in $O(\log \log n)$ time. Since the number of such nodes is $\ell=O(\log\log n)$, the total query cost increases by a negligible term $O((\log \log n)^2)$.

All auxiliary shallow cuttings $\cE(\cdot)$ and $\cC'(\cdot)$ use linear space:  consider a node $u$. For every node range $S(u,l,r)$ we store two shallow cuttings that have  $O(|S(u,l,r)|/t_{i})$  $(|S(u,l,r)/t_{i-1})$ cells respectively. We store $\alpha_i/\alpha_{i+1}< \alpha_i$ pointers for each cell of $\cC'(E)$.  Since each point of $S(u)$ occurs in $\rho_i^2$ node ranges the total space used by all shallow cuttings associated to the $i$-node $u$ is  $O(m\cdot (\alpha_i/\rho^2_i))=O(m/\rho_i)$ where $m$ is the number of points in $S(u)$. Thus shallow cuttings in all nodes of $T^i$ consume $O((n\log n)/\rho_i)$ space.  Since $\sum_{i=0}^h (1/\rho_i)\log n=O(1)$, all additional shallow cuttings consume $O(n)$ space. Finally we can store points in the cells of  shallow cuttings $C_i(u,l,r)$ in $O(n\log^{\eps} n)$ words of space using the method of~\cite{Nekrich20}. Hence the total space usage is $O(n\log^{\eps}n)$. 
\begin{theorem}
  \label{theor:opttime}
  There exists an $O(n\log^{\eps} n)$-word data structure that answers four-dimensional dominance range reporting queries in $O(\log n/\log\log n+ k)$ time. 
\end{theorem}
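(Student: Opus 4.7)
The plan is to implement the sequence-of-embedded-trees scheme sketched in the Optimal Query Time paragraph. I would build a range tree $T$ on the fourth coordinate with node degree $\gamma=\log^{\eps} n$, then embed $T^0,T^1,\ldots,T^h$ inside $T$ using the parameters $\alpha_0=\log\log n$, $\alpha_i=\sqrt{\alpha_{i-1}}\log^2\alpha_{i-1}$, $\rho_i=\gamma^{\alpha_i}$, $t_i=\rho_i^4$, stopping at $h$ with $\alpha_h=\Theta(1)$. For every $i$-node $u$ and every child range $S(u,l,r)$ I would store the coarse shallow cutting $\cE(u,l,r)$ at threshold $2t_{i-1}$ and, inside each cell $E$, the finer refinement $\cC'(E)$ at threshold $t_i$; at the leaves of the hierarchy I would keep $\cC(v_f,l_f,r_f)$ on nodes of $T$, choosing the constant in $\gamma$ so that $t_h=\Theta(\log^3 n)$. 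The essential glue is that $t_i\ge 2\cdot 2t_{i-1}$ at every level, so Lemma~\ref{lemma:contain} lets me install from each cell $C'$ of the parent refinement a pointer into the unique cell of each child $\cE(v_f,l_f,r_f)$ containing it; these pointers make the descent at each stage cost only one planar point-location call.

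To answer a query $q=(q_x,q_y,q_z,q_{z'})$ I would process the path to $q_{z'}$ stage by stage. At stage $0$, Chan's $O(\log\log n)$-time point location finds the relevant cell of $\cC'(u,l,r)$ in each of the $O(\lambda(n)/\alpha_0)$ visited $0$-nodes, spending $O(\lambda(n))$ total time where $\lambda(n)=\log n/\log\log n$. At stage $i\ge 1$, the pointer from the cell chosen at stage $i-1$ places us inside the correct cell of $\cE(v_f,l_f,r_f)$, and Lemma~\ref{lemma:domin3d-fast} identifies the next cell of $\cC'$ in $O(\sqrt{\alpha_{i-1}})$ time; summed over the $O(\lambda(n)/\alpha_i)$ visited $i$-nodes this is $\lambda(n)/\log^2\alpha_{i-1}$, and convergence of $\sum_i 1/\log^2\alpha_i$ (Section~\ref{sec:analysis4d}) gives $O(\lambda(n))$ across all stages. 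At the final stage the pointers land in cells $C_f$ of $\cC(v_f,l_f,r_f)$ holding $O(\log^3 n)$ points each, from which Lemma~\ref{lemma:small0} reports the dominated points in $O(1+k)$ time. Short subtrees of height below $3\log\log n$ are handled by a naive three-dimensional structure in each node, contributing a negligible $O((\log\log n)^2)$; if $q'$ falls outside every cell at some stage, then $k\ge t_h$ and I can fall back on any $O(\log n+k)$-time structure.

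For space, I would charge every cell of $\cC'$ at level $i$ with its $O(\alpha_i)$ outgoing pointers, observe that a point of $S(u)$ belongs to only $O(\rho_i^2)$ ranges at an $i$-node so the per-point charge per level is $O(\alpha_i/\rho_i)$, and sum over levels using the rapid growth of $\rho_i$ to bound the total contribution of all auxiliary cuttings by $O(1)$ words per point. Point coordinates inside the bottom-level cuttings $\cC(v_f,l_f,r_f)$ are stored via the $O(n\log^{\eps} n)$-space construction of~\cite{Nekrich20}, which dominates the final budget. The main obstacle I expect is the arithmetic verification that the two telescoping series (the time bound $\sum 1/\log^2\alpha_i$ and the space bound $\sum \alpha_i/\rho_i$) both converge under the recurrence $\alpha_i=\sqrt{\alpha_{i-1}}\log^2\alpha_{i-1}$, and that the containment inequality $t_i\ge 4t_{i-1}$ holds uniformly so Lemma~\ref{lemma:contain} applies at every hand-off; a secondary bookkeeping subtlety is that at the boundary between consecutive trees $T^i$ and $T^{i+1}$ the canonical decomposition of a parent range into $(i+1)$-node ranges must align with the children whose cuttings receive the pointers, which I would verify level by level.
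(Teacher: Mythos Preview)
Your proposal is correct and follows the paper's construction essentially verbatim: the same embedded-tree hierarchy $T^0,\ldots,T^h$ with the same parameters $\alpha_i,\rho_i,t_i$, the same two-tier cuttings $\cE/\cC'$ with pointers installed via Lemma~\ref{lemma:contain}, the same staged query procedure, and the same telescoping time and space sums resolved in Section~\ref{sec:analysis4d}. One minor slip: your containment inequality is stated backwards---since $\alpha_i$ decreases you have $t_i<t_{i-1}$, so ``$t_i\ge 4t_{i-1}$'' is false; what Lemma~\ref{lemma:contain} actually needs at the hand-off from level $i$ to level $i{+}1$ is that the receiving cutting $\cE(v_f,l_f,r_f)$ has threshold $2t_i\ge 2\cdot t_i$, twice the threshold of the departing cell $C'\in\cC'(E)$, and this holds by construction.
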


We can use the same data structure to support five-sided four-dimensional queries, i.e., four-dimensional queries that are bounded on five sides.    Using standard techniques this result can be extended to a data structure that uses $O(n\log^{3+\eps}n)$ space and answers arbitrary four-dimensional orthogonal range reporting queries in $O(\log n/\log\log n+ k)$ time.
\begin{theorem}
  \label{theor:opttime2}
  There exists an $O(n\log^{3+\eps} n)$-word data structure that answers four-dimensional dominance range reporting queries in $O(\log n/\log\log n+ k)$ time. 
\end{theorem}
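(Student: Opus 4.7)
The plan is to observe that the claim is an immediate weakening of Theorem~\ref{theor:opttime}. The data structure constructed there already answers four-dimensional dominance range reporting queries in $O(\log n/\log\log n + k)$ time using $O(n\log^{\eps}n)$ words of space, and since $O(n\log^{\eps}n) \subseteq O(n\log^{3+\eps}n)$ for the same arbitrary small constant $\eps>0$, both bounds demanded by Theorem~\ref{theor:opttime2} are already met. Thus the proof is a single invocation of the previous theorem; no new construction is required. The only (trivial) thing to verify is that the $\eps$ in the two statements can be identified, which is immediate because $\eps$ is an arbitrary positive constant in both.

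Although the formal proof ends here, it is worth flagging the reason the statement is written with a strictly larger space bound than Theorem~\ref{theor:opttime}. The $\log^3 n$ slack is precisely what is consumed by the ``standard techniques'' mentioned in the paragraph preceding the statement, which lift a one-sided 4D reporting structure to support arbitrary four-dimensional orthogonal queries. In that lifting one places a balanced range tree on each of the three coordinates whose constraint is not already one-sided; a fresh copy of the 4D dominance structure of Theorem~\ref{theor:opttime} is attached to the canonical node sets, contributing a multiplicative $\log n$ factor of space per axis. Standard fractional cascading across the $O(\log_{\gamma} n)$ canonical nodes touched by a single query preserves the $O(\log n/\log\log n + k)$ query time. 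Hence the total bound $O(n\log^{3+\eps}n)$ space and $O(\log n/\log\log n + k)$ time would indeed be attained after this extension, and the shape of the space bound in Theorem~\ref{theor:opttime2} reflects that intended use.

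There is therefore no genuine obstacle: the dominance case stated here is subsumed by Theorem~\ref{theor:opttime}, and the only work that would be nontrivial, namely the range-tree lifting sketched above, pertains to the stronger (general orthogonal) statement rather than to the dominance statement actually being proved.
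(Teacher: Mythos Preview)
Your reading is correct: as literally stated (four-dimensional \emph{dominance} reporting), Theorem~\ref{theor:opttime2} is an immediate weakening of Theorem~\ref{theor:opttime}, and the paper offers no separate proof beyond the sentence that precedes it. You are also right that the word ``dominance'' is evidently a slip for ``orthogonal''; the preceding paragraph says exactly this, and the $\log^3 n$ gap is there precisely to pay for the extension to arbitrary 4D boxes.

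Your sketch of that extension matches the paper's intent but is slightly imprecise on two points. First, the paper does not start from the dominance structure and add range trees on three of the four axes; it first observes that the range-tree construction of Theorem~\ref{theor:opttime} already supports a two-sided constraint on the fourth coordinate (five-sided queries) at no extra cost, and then applies the standard lifting three times to the remaining three axes. Second, the mechanism that keeps the query time at $O(\log n/\log\log n + k)$ after each lift is not fractional cascading but the LCA trick: a binary range tree on the new axis lets one split a two-sided constraint into two one-sided queries at the LCA, so each lift multiplies the number of subqueries by a constant while multiplying space by $O(\log n)$. With three lifts this yields $O(n\log^{3+\eps}n)$ space and unchanged query time, exactly as claimed.
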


It is possible to reduce the space usage of the data structure in Theorem~\ref{theor:opttime2} to $O(n\log^{2+\eps} n)$ words using the lopsided grid approach from~\cite{ChanLP11}. \shlongver{This result is omitted from this paper for space reasons and will be presented in the full version of this paper.}{This result is presented in Section~\ref{sec:space4d}.}

\appendix
\section{Reduction to Rank Space and Range Reporting on a Small Set of Points}
\label{sec:rankspace}
We can reduce an orthogonal range searching problem on  a set $P$ of $m$ points to the special case when all point coordinates are positive integers bounded by $m$~\cite{GabowBT84,AlstrupBR00}. This can be achieved by replacing every point coordinate by its rank.
In the case of three-dimensional points every point $p=(p.x, p.y, p.z) $ in a set $P$ is
replaced with $p'=(\ra(p.x,S_x),\ra(p.y,S.y), \ra(p.z,S_z))$, where $P_x$, $P_y$, and $P_z$ denote the sets of $x$-, $y$-, and $z$-coordinates of points in $P$. For any point $p$ we have: 
\[ p\in [a,b]\times [c,d]\times [e,f] \Leftrightarrow p'\in [a',b']\times [c',d']\times [e',f']\]  where
$a'=\ra(\ssucc(a,P_x),P_x)$, $c'=\ra(\ssucc(c,P_y),P_y)$, $e'=\ra(\ssucc(e,P_z),P_z)$, $b'=\ra(b,P_x)$, $d'=\ra(d,P_y)$, $f'=\ra(f,P_z)$; the successor of a value $x$ in a set $X$, denoted $\ssucc(x,X)$,  is the smallest element in a set $X$ that is larger than or equal to $x$.

The following Lemma can be used for rank reduction on a set of poly-logarithmic size. 
\begin{lemma}\cite{GrossiORR09}
  \label{lemma:grossi}
  Suppose that we can access any element of an integer set $S$ in time $O(t_{acc})$ and $|S|=\log^{O(1)}n$. There is a data structure that  answers predecessor and successor queries on $S$ in $O(1+t_{acc})$ time and  uses $O(|S|\log\log n)$ additional bits.   
\end{lemma}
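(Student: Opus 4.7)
The plan is to represent $S$ by a Patricia (blind) trie on the binary encodings of its elements, and answer each query with a single blind descent followed by $O(1)$ oracle probes. The trie has $|S|-1$ internal nodes, and each one only needs to record its branching bit position (a number in $[0,w)$ stored in $O(\log w)=O(\log\log n)$ bits) together with a constant amount of tree-shape information, so the whole structure occupies $O(|S|\log\log n)$ bits, matching the claimed space. The keys themselves are never kept inside the structure: a leaf only remembers the rank in $S$ of the element it represents, and that element is fetched through the oracle on demand.

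The query algorithm for predecessor$(q)$ (successor being symmetric) is the classical blind-search procedure. I descend the trie, branching at each internal node according to the bit of $q$ at the stored position; upon reaching a leaf $\ell$, I retrieve $S[\ell]$ through the oracle in time $O(t_{acc})$ and compute the longest common prefix of $q$ with $S[\ell]$. If the highest disagreeing bit lies strictly below the deepest branching position that was consulted, $\ell$ already identifies the answer; otherwise the highest disagreeing bit $\beta$ pinpoints an ancestor on the descent path whose opposite subtree contains the predecessor, and one additional oracle probe (to the rightmost leaf of that subtree, which can be cached at each internal node in $O(\log|S|)=O(\log\log n)$ bits without exceeding the space budget) returns it.

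The non-trivial part is making the blind descent itself cost $O(1)$ rather than the naive $\Theta(\log|S|)=\Theta(\log\log n)$. A single root-to-leaf path has $O(\log\log n)$ edges, each labelled by an $O(\log\log n)$-bit branching position, so a whole path-worth of branching labels together with its shape fits in $O((\log\log n)^2/\log n)=O(1)$ machine words. I would pack, for every maximal path of the trie, its labels and shape into a constant number of words, and use word-parallel operations in the RAM model — parallel bit extraction at arbitrary positions together with a universal lookup table of size $O(n^{\eps})$, initialisable in $O(n^{\eps})$ time as mentioned in Section~\ref{sec:intro} — to map the pair (packed path labels, extracted $q$-bits) in one step to the rank of the leaf that the blind descent reaches.

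The main obstacle is exactly this $O(1)$-time simulation of the blind descent: packing the branching labels into a constant number of words, choosing the right packing across the $|S|=\log^{O(1)}n$ possible paths so that the total bit budget is still $O(|S|\log\log n)$, and showing that the bit-manipulation primitives needed (parallel extraction plus one table lookup) all fit the standard word-RAM model used in the paper. Once this is in place, correctness of blind search and the accounting of stored bits are standard, and the overall cost per query is $O(1)$ inside the structure plus $O(1)$ oracle accesses, giving the claimed $O(1+t_{acc})$ bound.
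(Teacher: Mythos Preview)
The paper does not prove this lemma; it is quoted directly from~\cite{GrossiORR09} without argument, so there is no in-paper proof to compare against. Your task is really to reconstruct the cited result, and your outline has two genuine gaps.

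First, the depth bound is wrong. A Patricia trie on $|S|$ keys is not balanced: a root-to-leaf path can have up to $\min(|S|-1,\,w)$ edges, where $w=\Theta(\log n)$ is the key length. Since the lemma allows $|S|=\log^{c} n$ for an arbitrary constant $c$, the depth can be $\Theta(\log n)$, not $O(\log\log n)$. Your sentence ``a single root-to-leaf path has $O(\log\log n)$ edges'' conflates $\log|S|$ (which is $O(\log\log n)$) with the trie depth (which need not be). With the correct depth bound, a packed path occupies $\Theta(\log n\cdot\log\log n)$ bits, not $O(1)$ words.

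Second, and more fundamentally, the packing scheme is circular. You propose to precompute, for every root-to-leaf path, the packed sequence of its branching positions, and then ``map the pair (packed path labels, extracted $q$-bits) in one step to the rank of the leaf.'' But the particular path you must consult is precisely the path the blind descent of $q$ follows; you cannot select the right packed path, or decide which bit positions of $q$ to extract, without having already performed the descent. Storing all $|S|$ packed paths does not help, because choosing among them in $O(1)$ time \emph{is} the problem you are trying to solve.

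What is missing is a mechanism that identifies the blind-descent leaf in $O(1)$ time without first knowing the path. Working constructions (as in~\cite{GrossiORR09}) achieve this by making the search structure have constant \emph{depth}: with fan-out $\log^{\Theta(1)} n$ the depth is $O(1)$ because $|S|=\log^{O(1)} n$, and at each of the $O(1)$ levels a fusion-tree-style parallel comparison or a single table lookup on a packed word of sketches selects the correct child. Only afterwards are $O(1)$ oracle probes used to verify and return the predecessor. Your Patricia-trie encoding gives the right $O(|S|\log\log n)$-bit space bound, but the $O(1)$-time analysis needs this additional idea rather than per-path packing.
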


Suppose that we store the set $P$ of three-dimensional points such that $|P|=\log^{O(1)}n$ and every point of $P$ can be accessed in time $O(t_{acc})$. By Lemma~\ref{lemma:grossi}, we can answer successor queries on  $P_x$, $P_y$, and $P_z$ in $O(t_{acc})$ time using $O(\log\log n)$ bits per point.

\begin{lemma}
   \label{lemma:small0}
If a set $P$ contains $m=\log^{O(1)} n$ points in the rank space of $P$, then we can keep $P$ in a data structure  that uses $O(m\log\log n)$ bits and answers three-dimensional dominance range reporting queries and three-dimensional five-sided rage reporting queries in $O(k)$ time. This data structure  uses  a universal look-up table of size $o(n)$.
\end{lemma}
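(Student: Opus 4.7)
Since $P$ is given in its own rank space and $m=\log^{O(1)}n$, every coordinate fits in $O(\log\log n)$ bits, so I can store $P$ in a packed array of $O(m\log\log n)$ bits. My plan is to answer queries by a small number of lookups into one universal table whose size is $o(n)$.

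For 3D dominance I would build a Chan-style shallow-cutting structure in miniature: a hierarchy of $t$-shallow cuttings on $P$ for geometrically growing values of $t$, together with the induced planar subdivisions of the $m\times m$ $(x,y)$-grid. All cells, their conflict lists, and the subdivisions are stored in packed form, so the entire structure still occupies $O(m\log\log n)$ bits after careful bit accounting. Given a query $q$, the smallest $t$ whose cutting contains $q$ yields a cell with $\Theta(k+1)$ points, and reporting those dominated by $q$ costs $O(1+k)$ work once the cell is located.

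The crucial tool is a single universal table of size $n^{O(\eps)}\cdot\log^{O(1)}n=o(n)$, indexed by a block of $\Theta(\eps\log n)$ bits of packed payload together with an $O(\log\log n)$-bit rank-space query. The table returns either the bitmask of dominated points inside a conflict-list block or the next cell to descend into for a subdivision block. Because the table is indexed purely by bit patterns, it is genuinely universal and independent of $P$, and the $O(1+k)$ time bound reduces to showing that both the point-location cascade and the reporting scan consume $O(1+k)$ block probes in total; doubling $t$ geometrically makes the probes telescope to the size of the chosen cell.

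For the five-sided case I would replace the dominance shallow cuttings with shallow cuttings tailored to five-sided queries (splitting the query at its single bounded face into a constant number of dominance-style subqueries) and otherwise reuse exactly the same universal table. The main obstacle I expect is verifying that point location across the cutting hierarchy can indeed be carried out in $O(1)$ block probes despite $m$ being polylogarithmic rather than constant: this relies on choosing $\eps$ small enough that blocks are $o(n^{1/2})$-sized while simultaneously ensuring that the subdivision descriptor at one cutting level fits inside a single block, so that one table probe pinpoints the next cell to visit.
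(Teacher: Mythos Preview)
The paper does not give a self-contained proof here; it simply refers to Lemma~7 of~\cite{N20arx} and asserts that the same method extends to polylogarithmic $m$ and to five-sided queries. Your overall framework --- a packed hierarchy of shallow cuttings together with a universal table indexed by $O(\eps\log n)$-bit blocks --- is the standard machinery behind such results and is almost certainly what the cited lemma uses, so at the level of ``which tools'' you are aligned with the paper.

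The genuine gap is in your point-location cascade. With a doubling hierarchy $t_i=2^i$ there are $\Theta(\log m)=\Theta(\log\log n)$ levels, so even if every navigation step costs $O(1)$ the cascade alone costs $\Omega(\log\log n)$, which is not $O(1+k)$ when $k$ is small. Your ``telescoping'' claim does not rescue this: you descend from the top level down to level $i^*\approx\log k$, so the number of steps is $\log m-\log k$, which is $\Theta(\log\log n)$ for constant $k$ rather than $O(k)$. Your alternative fix --- choosing $\eps$ so that an entire level's subdivision descriptor fits in one block --- is also impossible, since at the lowest levels the subdivision has $\Omega(m)$ cells and hence $\Omega(m\log\log n)$ bits, far exceeding $\eps\log n$ for any constant $\eps$ once $m$ is, say, $\log^{8}n$. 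What actually makes the argument go through is the hypothesis $m=\log^{O(1)}n$: taking the ratio between consecutive cutting parameters to be $b\approx\log n/(\log\log n)^{O(1)}$ gives only $\log_b m=O(1)$ levels, and the per-cell navigation data (a local cutting of the $O(t_i)$-point conflict list into $O(b)$ sub-cells, each described in $O(\log m)=O(\log\log n)$ bits) then occupies $O(b\log\log n)=O(\eps\log n)$ bits and fits in a single block. Each of the $O(1)$ navigation steps is thus one table probe, and the bottom-level conflict list likewise fits in one block. As written, your doubling scheme would only yield $O(\log\log n+k)$.
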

Lemma~\ref{lemma:small0} can be proved in exactly the same way as Lemma 7 in~\cite{N20arx}. Lemma 7 in~\cite{N20arx} 
is proved for dominance queries on $O(\log^2 n)$ points.
However exactly the same method  can be also used for five-sided queries and for any poly-logarithmic number of points. 

Combining Lemma~\ref{lemma:small0} and the observations after Lemma~\ref{lemma:grossi}, we have the following result.
\begin{lemma}
   \label{lemma:small0}
If a set $P$ contains $t=\log^{O(1)} n$ points and we can obtain the coordinates of any point in $P$ in time $O(t_{acc})$. There is a data structure  that uses $O(m\log\log n)$ bits additional bits and answers three-dimensional dominance range reporting queries and three-dimensional five-sided rage reporting queries in $O(k\cdot t_{acc})$ time. This data structure  uses  a universal look-up table of size $o(n)$.
\end{lemma}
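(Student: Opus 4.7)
The plan is to prove this lemma by reducing the query to rank space on the fly and invoking the previous small-set lemma (the rank-space version) as a black box. Concretely, I would build three auxiliary structures on top of $P$: one Grossi-et-al.\ predecessor/successor structure for each of $P_x$, $P_y$, $P_z$, using Lemma~\ref{lemma:grossi}. Since $|P_x|,|P_y|,|P_z|\le t = \log^{O(1)} n$ and each coordinate is accessible in time $O(t_{acc})$, each auxiliary structure costs $O(t \log \log n)$ bits and answers predecessor, successor, and rank queries in time $O(t_{acc})$.

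Next I would apply the previous Lemma (the rank-space version stated just above) to the rank-reduced copy $P'$ of $P$. Note that we do not need to store the rank-space coordinates of points in $P'$ explicitly: for the construction of that data structure the rank of any coordinate of any point of $P$ is computable in $O(t_{acc})$ time by combining access to $P$ with the rank query of Lemma~\ref{lemma:grossi}, so the preprocessing goes through. The resulting structure occupies $O(m\log\log n)$ additional bits and, crucially, inherits the universal look-up table of size $o(n)$ from the rank-space lemma.

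To answer a three-dimensional dominance query or a five-sided query with box $Q=[a_x,b_x]\times[a_y,b_y]\times[a_z,b_z]$, I would first map the $O(1)$ finite endpoints of $Q$ to the rank space of $P$ by performing $O(1)$ successor/predecessor queries on $P_x,P_y,P_z$, using the standard rule $a'=\mathrm{rank}(\ssucc(a,P_x),P_x)$, $b'=\mathrm{rank}(b,P_x)$ recalled in the appendix preamble. This transformation takes $O(t_{acc})$ time and yields a query box $Q'$ in rank space. I then invoke the rank-space lemma to answer $Q'$ in $O(k)$ time, obtaining the rank-space encodings of the $k$ output points. For each output I perform one $O(t_{acc})$-time access to $P$ to recover the actual coordinates, giving total query time $O((k+1)t_{acc})=O(k\cdot t_{acc})$ in the regime where $k\ge 1$ (and $O(t_{acc})$ when the answer is empty, which fits into the stated bound since the lemma is only useful when at least one coordinate access is needed).

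I expect no serious obstacle: the two ingredients (Lemma~\ref{lemma:grossi} for rank reduction on poly-logarithmic sets, and the previous small-set lemma for rank-space queries) compose cleanly, and the space bound $O(m\log\log n)$ dominates the $O(t\log\log n)$ cost of the three predecessor structures since $t=\log^{O(1)}n$. The only mildly delicate point is verifying that the rank-space small-set lemma does not depend on having the rank-space coordinates stored verbatim—only on being able to compute them on demand—which is exactly what the Grossi-et-al.\ rank query provides; beyond that, the universal look-up table of $o(n)$ size is shared across all invocations, so the construction overhead does not blow up.
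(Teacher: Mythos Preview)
Your proposal is correct and follows essentially the same approach as the paper: the paper's proof is the single sentence ``Combining Lemma~\ref{lemma:small0} and the observations after Lemma~\ref{lemma:grossi}, we have the following result,'' and your write-up is a faithful expansion of exactly that combination (Grossi et al.\ predecessor/successor structures on $P_x,P_y,P_z$ for on-the-fly rank reduction, then the rank-space small-set lemma as a black box).
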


\section{Additional Figures}
\label{sec:figures}

\begin{figure}
\centering
\begin{minipage}{.5\textwidth}
  \centering
  \includegraphics[width=.65\linewidth,page=1]{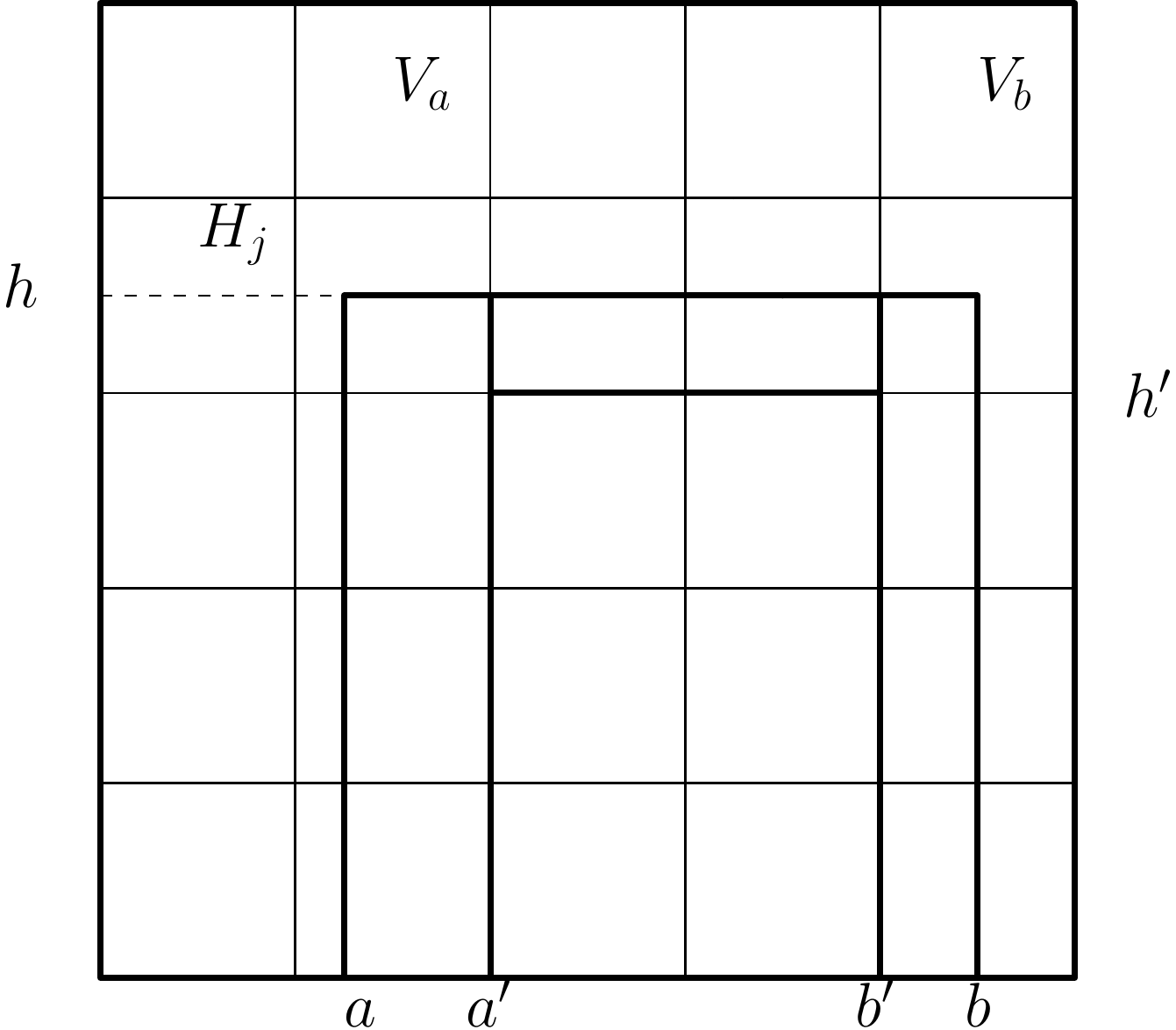}
  \label{fig:test1}
\end{minipage}%
\begin{minipage}{.5\textwidth}
  \centering
  \includegraphics[width=.65\linewidth,page=2]{grid}
  \label{fig:test2}
\end{minipage}
\caption{Left: decomposition of a four-sided query. Right: decomposition of a five-sided query. Only projections of points onto $(x,y)$-plane are shown.}
\label{fig:grid}
\end{figure}

\begin{figure}[tb]
  \centering
  \begin{minipage}{.5\textwidth}
    \centering
    \includegraphics[width=.85\linewidth,page=1]{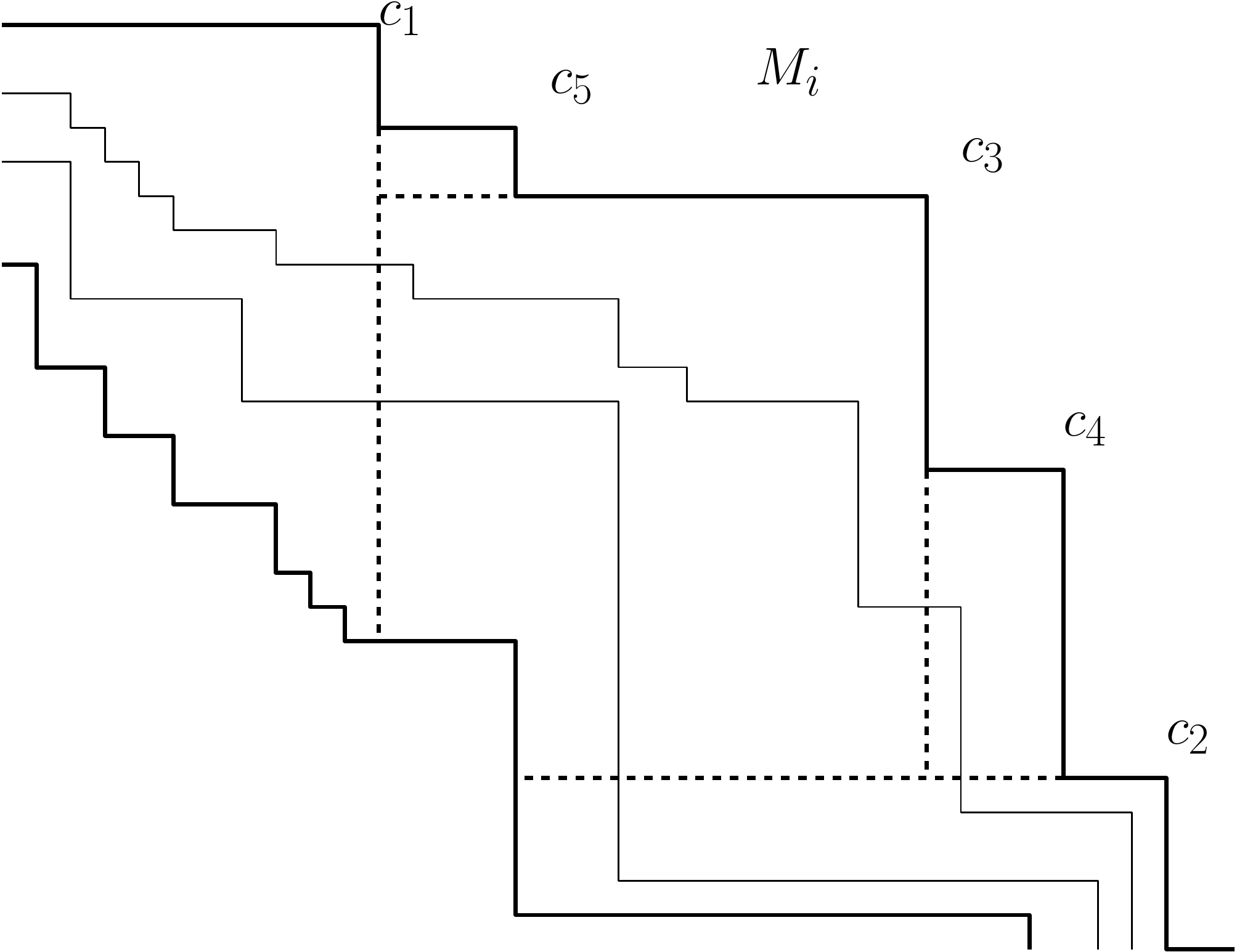}%
    $~$\\
    ({\bf a})
    \label{fig:cover1}
  \end{minipage}%
  \begin{minipage}{.5\textwidth}
    \centering
    \includegraphics[width=.85\linewidth,page=3]{shcut-cover}%
    $~$\\
    ({\bf b})
    \label{fig:cover2}
  \end{minipage}
  \caption{Division of a $t$-shallow cutting into regions. Left: shooting vertical and horizontal rays from corners of $M_i$. We assume that $d=3$ and $z(c_i)<z(c_{i+1})$ for $1\le i\le 5$. Right: regions associated to corners $c_3$, $c_4$, $c_5$ and partially $c_1$ are shown in different colors.The region of $c_1$ is empty. The region of $c_3$, shown in yellow, is divided into two rectangles. The regions of $c_5$ and $c_4$ consist of one rectangle each.}
  \label{fig:shcut-cover}
\end{figure}

\begin{figure}
\centering
\begin{minipage}{.5\textwidth}
  \centering
  \includegraphics[width=.65\linewidth,page=2]{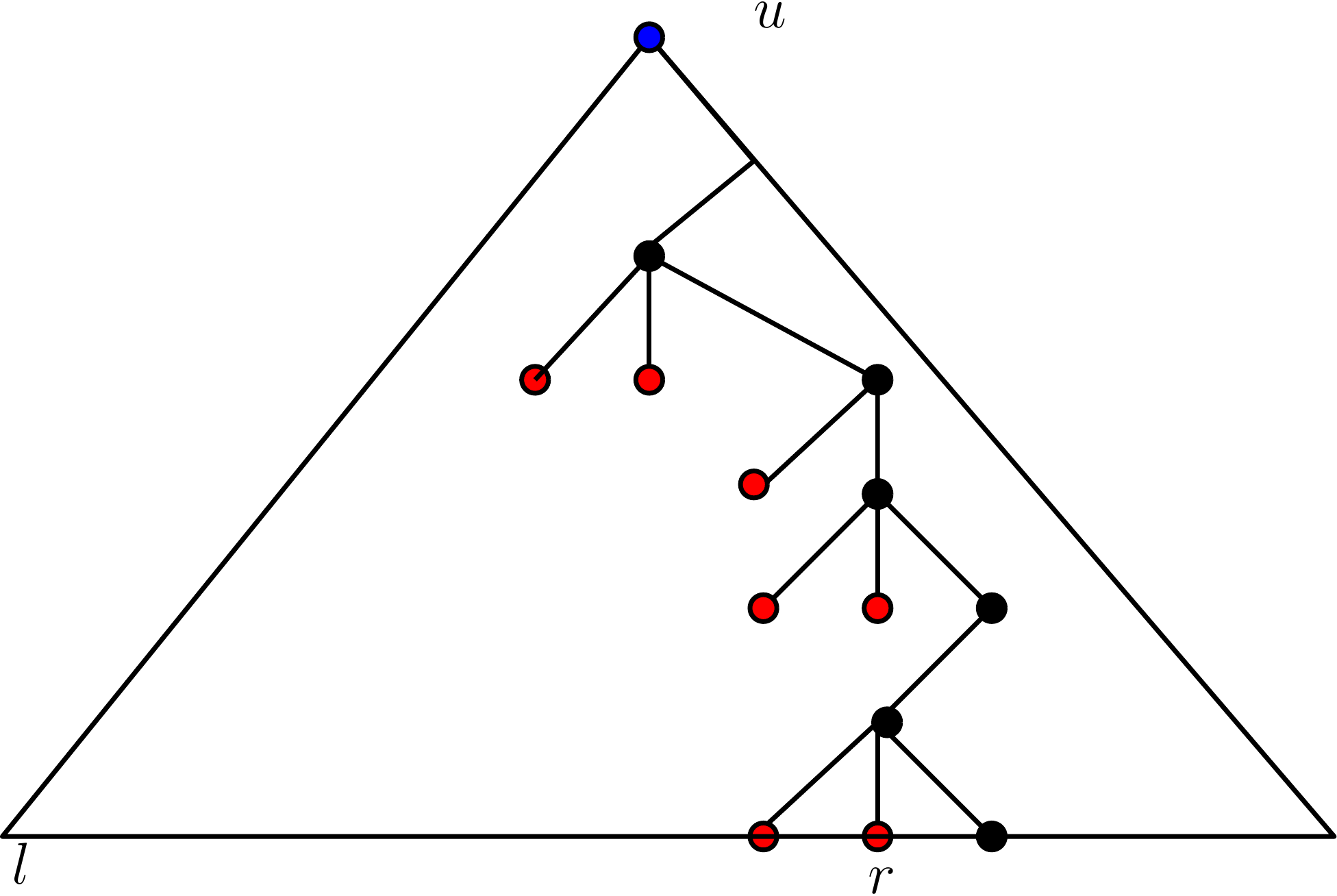}
  \label{fig:test1}
\end{minipage}%
\begin{minipage}{.5\textwidth}
  \centering
  \includegraphics[width=.45\linewidth,page=1]{range-tree}
  \label{fig:test2}
\end{minipage}
\caption{Left: Range tree and a path to $q_{z'}$. Triangles are subtrees corresponding to nodes of $T^0$. Red areas show the canonical decomposition of the $[1,q_{z'}]$. Right:  A subtree corresponding to a node $u\in T^0$ and its children.  Red nodes are a canonical decomposition of the node range $S(u,l,r)$.}
\label{fig:rangetree4d}
\end{figure}

\section{Proofs of Lemma~\ref{lemma:cov1} and Lemma~\ref{lemma:p2}}
\label{sec:coverlemmas}
\paragraph{Proof of Lemma~\ref{lemma:cov1}.}
\begin{proof}
  Consider a corner $c_j$ and a point $p\in list(C_j)$ that is not contained in any rectangle associated to a neighbor of $c_j$. The following cases are possible: (1) $p$ is contained in an empty region of some neighbor $c_l$ of $c_j$. By definition of a region, $l< j$, and $z(c_l)\ge z(c_j)$. Hence  $p \in list(C_l)$ and $p\in P_2$. (2) $p$  is dominated by (corners of ) at least $d$ non-empty regions. In this case $\pi(p)$ is dominated by at least $d$ corners $\pi(c_l)$ for some $l< j$. Since $z(c_l)>z(c_j)> z(p)$, $p$ is contained in the conflict list of every such $C_l$ and $p\in P_1$.  (3) $p$ is dominated by some corner of $M_{i+d}$. In this case we can show that  $p$ is in at least
  $d$ conflict lists:  for $l=1$, $2$, $\ldots$, $d$, the point $\pi(p)$ is dominated by some corner $\pi(c_{j_l})$ of $M_{i+l}$ such that $\pi(c_{j_l})$ is dominated by $\pi(c_j)$. If $\pi(c_{j_l})$ is dominated by $\pi(c_j)$, then $z(c_{j_l})> z(c_j)$. Since $p$ is in the conflict list of $c_j$, $z(p)<z(c_j)$ and $z(p)<z(c_{j_l})$. Hence $p$ is contained in the conflict lists of at least $d$ corners and $p\in P_1$. 
\end{proof}

\paragraph{Proof of Lemma~\ref{lemma:p2}}
\begin{proof}
  We assign $d$ dollars to every point in $list(C_i)$ for every cell $C_i$ of $\cC$.   The same point can appear in many lists, but the total number of elements in all conflict lists is $O(m)$ and our total budget is $O(m\cdot d)$ dollars.  We  assume that the cost of adding a point to $P_2$ is $d^2$ dollars, and we will show that $m\cdot d$ dollars are sufficient to construct $P_2$.  
  If the  region of $c_j$ is empty, then it contains at least $d^2$ corners $c_{j_l}$.  We charge $1$ dollar to every point in the conflict list of each $C_{j_l}$. Every corner $c_r$ is contained in  at most $d$ different regions:  By definition of a region,  a corner on $M_i$ can be contained only in the region of a corner on $M_j$ for $i> j\ge  i-d$. Regions of corners on the same level of maxima are disjoint. Thus every point is charged at most $d$ times. Hence $P_2$ contains $O(m/d)$ points.
\end{proof}

\section{Range Minima: Faster Queries in More Space}
\label{sec:minima2}
In this section we describe a data structure that uses $O(n\log\log n)$ words of space and answers queries in $O((k+1)\log\log n)$ time. We use the same recursive grid as in Section~\ref{sec:recur}.
Our approach is based on constructing a data structure for four-sided queries in every slab.

\begin{lemma}
  \label{lemma:compact3sid}
  There exists a data structure supporting capped four-sided queries in $O(\log\log n + k)$ time and $O(m\log m)$ bits of space where $m$ is the number of points in a slab. The data structure relies on a universal data structure for two-dimensional range selection queries. 
\end{lemma}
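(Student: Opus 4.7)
The plan closely parallels Section~\ref{sec:slabdomin}, trading bits for query speed: instead of $O(\log\log n)$ bits per point and $O(\log^{\eps} n)$ per reported point, the structure uses $O(\log m)$ bits per point so that each reported point is returned in $O(1)$ time, while the cost of locating the relevant cell drops to $O(\log\log n)$.

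First, I would build a $t$-shallow cutting $\cC$ for four-sided three-dimensional queries with $t=\log^c n$ for a suitable constant $c$, giving $O(m/t)$ cells each containing $O(t)$ points. Using Chan's orthogonal point-location structure~\cite{Chan13}, the cell containing a query point is located in $O(\log\log n)$ time. I then apply Theorem~\ref{theor:shallowpart} with $d=\log n$ to cover the conflict lists by a family $\cR$ of $z$-unbounded rectangles, with an exceptional set $P'$ of $O(m/\log n)$ points stored separately in a Chan-style four-sided reporting structure using $O(\log m)$ bits per point; this contributes $O(m)$ bits and $O(\log\log n + k)$ query time.

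Inside each cell, the conflict list is reduced to rank space: each point is identified by the index of its containing rectangle in $\cR$ ($O(\log d)$ bits) together with its $x$-rank within that rectangle ($O(\log t + \log d)=O(\log\log n)$ bits). The cell stores a small data structure --- a four-sided analogue of Lemma~\ref{lemma:small0} on a set of $O(t)=O(\mathrm{polylog}\,n)$ rank-space points --- that answers capped four-sided queries in $O(k)$ time. Because the lemma permits $O(m\log m)$ bits, I additionally keep for every point in the slab an explicit $O(\log m)$-bit pointer to its global identity; once the cell structure has enumerated the $k$ answers by their in-cell ranks, each answer's global coordinates are read off in $O(1)$ time. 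The only items that need to be recovered through the universal two-dimensional range-selection structure are the cell and rectangle corners used to locate the cell and reduce the query point to rank space, which is done $O(1)$ times per query at a one-time cost of $O(\log\log n)$.

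The bit accounting is: $O(m\log\log n)$ bits for the rank-space cell data, $O(m\log m)$ bits for the explicit identifier table, and $o(m)$ bits for rectangle metadata (each cell holds $O(d^3)$ rectangle references with $O(\log^{O(1)} n)$ bits each, and there are only $O(m/t)$ cells). Together with the $O(m)$ bits for $P'$, the total is $O(m\log m)$ bits, and the query cost is $O(\log\log n)$ for cell location plus $O(k)$ for reporting. The main obstacle I expect is designing the per-cell four-sided reporting gadget: the dominance variant in Lemma~\ref{lemma:small0} does not directly handle the fourth side, so either a word-RAM bit-packing argument on the $O(\mathrm{polylog}\,n)$ rank-space coordinates, or a second internal shallow cutting inside each cell that reduces the four-sided query to constantly many dominance queries, will be needed to preserve the $O(k)$ per-cell bound.
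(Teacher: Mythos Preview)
Your proposal diverges from the paper's proof in a fundamental way, and the divergence hides a real gap. The paper does \emph{not} apply the shallow-cutting machinery directly to four-sided queries. Instead it builds a range tree of fan-out $\log^{\eps} n$ on the $x$-coordinates; a four-sided query $[a,b]\times[0,c]\times[0,d]$ is answered by going to the lowest common ancestor $w$ of $a$ and $b$, asking one dominance query in each of the two relevant children of $w$, and one ``narrow'' four-sided query (with $x$-range contained in $[1,\log^{\eps}n]$) at $w$ itself. The dominance structures are exactly those of Section~\ref{sec:slabdomin} at $O(\log\log n)$ bits per point, and the narrow four-sided structure (Lemma~\ref{lemma:narrow4sided}) is built separately using two-dimensional shallow cuttings. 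The $O(m\log m)$ bits arise because the tree has $O(\log m/\log\log n)$ levels and each level costs $O(\log\log n)$ bits per point.

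The gap in your plan is the first step: ``build a $t$-shallow cutting for four-sided three-dimensional queries''. The shallow cutting of Section~\ref{sec:cover} and Theorem~\ref{theor:shallowpart} is specifically a three-dimensional \emph{dominance} cutting: its cells are boxes $[0,a]\times[0,b]\times[0,c]$, a query is a single point $q\in\mathbb{R}^3$, and cell location is planar orthogonal point location. A four-sided query $[a,b]\times[0,h]\times[0,z]$ has four free parameters; it is not a point in $\mathbb{R}^3$, so there is no cell ``containing $q$'' to locate, and Theorem~\ref{theor:shallowpart} (whose whole argument is about projections of dominance-cell corners onto the $xy$-plane) does not apply. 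You would need an entirely different cutting theory for four-sided ranges, together with a new covering theorem, and a way to locate a four-parameter query in it in $O(\log\log n)$ time. None of this is supplied, and it is exactly the difficulty the paper sidesteps with the range-tree reduction. (Incidentally, your stated obstacle---the per-cell four-sided gadget---is not actually a problem: Lemma~\ref{lemma:small0} already handles five-sided queries on polylog-size sets.)
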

\begin{proof}
  W. l. o. g. we consider queries $[a,b]\times [0,c]\times [0,d]$. We construct a range tree with node degree $\log^{\eps} n$ on $x$-coordinates of points. We keep two  dominance data structures in every tree node. These data structures support queries $[0,b]\times [0,c]\times [0,d]$  and   $[a,+\infty)\times [0,c]\times [0,d]$. Additionally each node contains a data structure that stores modified points supports "narrow" four-sided queries of the form $[i_1\times i_2]\times [0,c]\times [0,d]$ where $1\le i_1\le i_2\le \log^{\eps}n$. For every point $p=(p_x,p_y,p_z)$ stored in a node $u$, the narrow queries data structure contains  a point $(i,p_y,p_z)$, such that $p$ is stored in the $i$-th child of $u$.  

  To answer a query we identify the leaves $l_a$ and $l_b$ holding the successor of $a$ and the predecessor of $b$ respectively. Let $w$ denote the lowest common ancestor of these two leaves. Let $w_i$ and $w_j$ denote the children of $w$ that are ancestors of $l_a$ and $l_b$.  We answer the query $[a,+\infty)\times  [0,c]\times [0,d]$ in $w_i$ and $[0,b]\times [0,c]\times [0,d]$ in $w_j$.  Additionally we answer a narrow four-sided  query $[i+1,j-1]\times [0,b]\times [0,d]$ on in the node $w$. The answer to the query contains all points from  $[a,b]\times [0,c]\times [0,d]$.

Dominance data structures are implemented as in Section~\ref{sec:slabdomin} and use $O(\log\log n)$ bits per point. We will show below that each narrow four-sided strucure also uses $O(\log \log n)$ bits per point. Since each point is stored twice on every level of the range tree and there are $O(\log m/\log\log n)$ levels, the total space usage is $O(m\log m)$ bits.   
\end{proof}

\begin{theorem}
  There exists  a data structure that supports five-sided three-dimensional range reporting queries in $O((k+1)\log\log n)$ time and uses $O(n\log\log n)$ space.\\
  The same data structure can be adjusted to support two-dimensional range minima queries in $O(\log \log n)$ time. 
\end{theorem}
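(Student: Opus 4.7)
The plan is to reuse the recursive grid of Section~\ref{sec:recur} and simply swap in Lemma~\ref{lemma:compact3sid}'s four-sided data structure as the per-slab engine. I would keep the same partition into $\sqrt{n/\log^3 n}$ vertical and $\sqrt{n/\log^3 n}$ horizontal slabs of $\sqrt{n\log^3 n}$ points, recurse on any slab that still holds $\Omega(\log^8 n)$ points, and store any smaller slab with the $O(\log\log n)$-bits-per-point gadget of Section~\ref{sec:rankspace}. Every slab now carries the structure of Lemma~\ref{lemma:compact3sid}: $O(m\log m)$ bits per slab and $O(\log\log n + k)$ time for capped four-sided queries. The top structure $D^t$ is replaced by the five-sided structure of~\cite{ChanLP11}; since $D^t$ holds only $O(n/\log^{2-\eps}n)$ points it still uses $o(n)$ words per grid level.

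Queries are answered with the decomposition of Section~\ref{sec:recur}. A five-sided query $[a,b]\times[h_1,h_2]\times[0,z]$ that crosses several vertical and several horizontal slabs is split into at most four three-/four-sided sub-queries on the bordering slabs $V_a,V_b,H_1,H_2$, each handled by Lemma~\ref{lemma:compact3sid} in $O(\log\log n + k_i)$ time, plus one query on $D^t$ for the aligned central region. A query fully contained in a single slab is forwarded to the slab's recursive substructure. For two-dimensional range minima I adapt Lemma~\ref{lemma:compact3sid} exactly as Section~\ref{sec:slabdomin} adapts its dominance structure: I replace each 3-d dominance structure in the internal nodes of the per-slab range tree by a 2-d dominance minima structure, and take the 2-d range minima variant of~\cite{ChanLP11} as the top. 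A 2-d minima query in a single slab is then delivered directly by this modified per-slab structure in $O(\log\log n)$ time, and a cross-slab query is the minimum over the $O(1)$ local answers, so no recursion is needed.

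The space accounting is immediate: $O(m)$ words per slab, $O(n)$ words per grid level, and $O(n\log\log n)$ words across the $O(\log\log n)$ recursion levels. For time, a five-sided query that crosses slabs is answered in $O(\log\log n + k)$ by the constant number of local queries. When it lies wholly inside a single slab, the recursion accumulates $O(\log\log n)$ stages, each contributing $O(\log\log n + k_j)$, for a cumulative bound of $O((\log\log n)^2 + k)$, which I absorb into the $O((k+1)\log\log n)$ target. Range minima incurs no reporting cost and no recursion, so the top-level decomposition already yields $O(\log\log n)$.

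The main obstacle is the uncapped regime $k>\log n$, in which Lemma~\ref{lemma:compact3sid} is permitted to return NULL. My plan is to dispatch such queries to a slower backup, either the Chazelle-style~\cite{Chazelle88} structure or a walk of the recursive grid that enumerates answers one at a time; the backup is triggered only once $k=\Omega(\log n)$, so any $O(\log\log n)$ per-point cost is absorbed by the $k\log\log n$ reporting term and does not harm the small-$k$ analysis. Combined with the construction above this gives the claimed $O((k+1)\log\log n)$ query time in $O(n\log\log n)$ words.
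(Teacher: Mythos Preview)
Your construction is exactly the paper's: the recursive grid of Section~\ref{sec:recur} with the four-sided structure of Lemma~\ref{lemma:compact3sid} in every slab, the~\cite{ChanLP11} structure as the top, Chazelle's~\cite{Chazelle88} $O(n\log\log n)$-word structure as the fallback for $k>\log n$, and the $O(n\log\log n)$-word space bound obtained from $O(\log\log n)$ grid levels at $O(n)$ words each.

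Your time accounting, however, has two slips. The absorption of $(\log\log n)^2+k$ into $(k+1)\log\log n$ fails for small $k$ (take $k=0$): a five-sided query that sits inside a single slab at each of the $\Theta(\log\log n)$ recursion levels really does pay $\Theta(\log\log n)$ per level for the rank-space conversion, which gives $\Theta((\log\log n)^2)$ before any reporting. The paper's own proof is just as terse here and simply asserts the $O((k+1)\log\log n)$ bound without working out this recursion, so you have in fact surfaced a point the paper leaves unjustified rather than made an error relative to it. Second, your claim that the range-minima variant needs ``no recursion'' is incorrect: the minima adaptation of Lemma~\ref{lemma:compact3sid} answers only the \emph{three-sided} 2-d analogue (it is built for four-sided 3-d queries $[a,b]\times[0,c]\times[0,d]$), so a general 2-d minima query $[a,b]\times[c,d]$ that lies entirely inside one slab is still four-sided there and must descend into the sub-grid, exactly as in the reporting case. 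The paper handles minima by the same $O(\log\log n)$-level decomposition, not by a recursion-free shortcut.
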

\begin{proof}
  We use the recursive grid described in Section~\ref{sec:recur} and store the four-sided data structure for each slab.  A capped five-sided query can be represented as a union of at most four four-sided queries and a query to a top data structure. Hence a query is answered $O((k+1)\log\log n)$ time. If $k>\log n$, we use the data structure of Chazelle~\cite{Chazelle88} that uses $O(n \log \log n)$ words and supports queries in $O(\log n \log\log n + k\log \log n)$ time. If $k\ge  \log n$, the query time of Chazelle's structure can be simplified to $O(k \log \log n)$.

  All slab data structures on each recursion levels use $O(n \log n)$ bits in total. Since the depth of recursion is $O(\log\log n)$, the total space usage is $O(n\log\log n)$ words.  We store each slab data structure in the rank space of its slab. Each point in a slab can be "decoded" in $O(\log\log n)$ time. Hence  we can transform a query to the rank space of a slab in $O(\log \log n)$ time.
\end{proof}

It remains to describe how narrow four-sided queries are answered. The following property of two-dimensional $t$-shallow cuttings, very similar to Theorem~\ref{theor:shallowpart},  will be used in our method. 

We can prove the analogue of Theorem~\ref{theor:shallowpart} for 2-d points. To keep the description unified with the rest of this section, we consider points on the $(y,z)$-plane.
\begin{theorem}
  \label{theor:shallowpart2d}
Let $\cC$ denote a $t$-shallow cutting of a two-dimensional set $P$, $|P|=m$. There exists a subset $P'$ of $P$ and a set of 2-d rectangles $\cR=\{\,R_1,R_2,\ldots, R_s\,\}$, such that
\begin{description}
\item[(a)] 
  $|P'|\le m/d$
\item[(b)]
  Rectangles $R_i$ are unbounded along the $z$-axis.
\item[(c)]
  The conflict list of any cell, except for points from $P'$, is contained in $O(d)$ rectangles from $\cR$, \[list(C_i)\cap (P\setminus P')\subseteq (list(C_i)\cap R_{i_1})\cup (list(C_i)\cap R_{i_2})\cup\ldots\cup (list(C_i)\cap R_{i_g}\] for $g=O(d)$.
\item[(d)]
  Each rectangle contains $O(t\cdot d)$ points of $P$. 
\end{description}
\end{theorem}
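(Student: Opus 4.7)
}
The plan is to adapt the three-dimensional construction, but take advantage of the fact that in two dimensions the corners of a shallow cutting already form a single antichain, so no analogue of the layered decomposition into $M_i$'s is needed and the ``regions'' degenerate into intervals on the $y$-axis. Sort the cells of $\cC$ by increasing $y$-coordinate of their corners as $C_1,\ldots,C_N$; since corners form an antichain the $z$-coordinates satisfy $c_1>c_2>\cdots>c_N$. A point $p=(y_p,z_p)\in P$ belongs to $list(C_j)$ iff $y_p\le a_j$ and $z_p\le c_j$: as $j$ increases the first condition becomes easier and the second harder, so $p$ lies in the conflict lists of a contiguous range $[L_p,R_p]$ of cells.

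First I would take $P_1$ to be the set of points with $R_p-L_p+1\ge d$; the standard double count $\sum_j |list(C_j)|\le 2tN=O(m)$ yields $|P_1|=O(m/d)$, just as in the three-dimensional proof. The candidate rectangles will be the vertical strips $T_j=(a_{j-1},a_j]\times[0,+\infty)$ for $j=1,\ldots,N$ (with $a_0=-\infty$), which are unbounded along the $z$-axis as required by (b).

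The key claim for condition (c) is that every $p\in list(C_j)\setminus P_1$ lies in $T_{L_p}$ for some $L_p\in[j-d+2,j]$, and therefore in one of $O(d)$ strips. To see this, look at the leftmost cell $C_{L_p}$ containing $p$: since $p\notin C_{L_p-1}$ either $y_p>a_{L_p-1}$ or $z_p>c_{L_p-1}$; the second option is impossible because $c_{L_p-1}>c_{L_p}\ge z_p$, so $a_{L_p-1}<y_p\le a_{L_p}$ and hence $p\in T_{L_p}$. The range on $L_p$ is forced by $R_p\ge j$ and $R_p-L_p<d$.

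Finally, to enforce (d) I would mimic the $P_3$ cleanup from the three-dimensional argument. Whenever $|T_j\cap P|\ge t\cdot d$, add $T_j\cap list(\cC)$ to an exception set $P_3$ and discard $T_j$ from $\cR$. Because the strips $T_j$ are pairwise disjoint in $y$, the sets $T_j\cap P$ are disjoint subsets of $P$, so at most $m/(td)$ strips are dense. Moreover $T_j\cap list(\cC)\subseteq list(C_j)$ (any point of $list(\cC)$ with $y\in(a_{j-1},a_j]$ automatically satisfies $z_p\le c_j$ by the same monotonicity argument as above), so each discarded strip contributes at most $2t$ points and $|P_3|=O(m/d)$. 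Setting $P'=P_1\cup P_3$ with the surviving strips as $\cR$ then delivers all four conditions. The only subtle point is the monotonicity argument ensuring that the leftmost containing cell automatically witnesses the $y$-side, which is what turns the potentially two-dimensional region decomposition of the three-dimensional proof into a trivial partition of the $y$-axis; everything else is routine double counting.
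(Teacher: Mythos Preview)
Your proof is correct and follows essentially the same approach as the paper's: both sort the corners by $y$-coordinate, define $P_1$ as the points appearing in at least $d$ conflict lists, take as $\cR$ the vertical strips between consecutive corner $y$-values, and discard strips with more than $td$ points while moving their $\le 2t$ relevant points into the exception set. Your write-up is in fact a bit more explicit than the paper's (the contiguous-range observation and the identification $p\in T_{L_p}$ make condition~(c) transparent), but the underlying construction and counting are identical.
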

\begin{proof}
  For a cell $C_i$ of the shallow cutting, let $c_i$ denote its upper right corner. We can assume w.l.o.g. that no cell $C_i$ is entirely contained in some other cell $C_j$. Hence no $c_i$ is dominated by $c_j$.  We will also assume that  all corners $c_i$ are sorted in increasing order by $y$-coordinates (and thus in decreasing order by $z$-coordinates). 

  We define $P_1$ to be the set of points stored in more than $d$ conflict lists. Since the total number of elements in all conflict lists is $O(n)$, $P_1$ contains $O(n/d)$ points. The set $\cR$ contains a rectangle $R_j=[c_{j-1}.y,c_j.y]\times (-\infty,+\infty)$ for every $c_j$ (we set $c_0.y=0$).  If a rectangle $R_j$ contains over $t\cdot d$ points, we add all points from $R'_j=[c_{j-1}.y,c_j.y]\times [0,c_j.z]$ to $P_2$ and remove $R_j$ from $\cR$. See Fig~\ref{fig:2d-shcut}.
  All points in  $R'_j$ are dominated by $c_j$; hence $R'_j$ contains at most $2t$ points. Hence there are $d/2$ points in $P$ for every point in $P_2$ and $|P_2|\le 2n/d$.  We set $P'=P_1\cup P_2$ so that $|P|=O(n/d)$.
  \begin{figure}
    \centering
    \includegraphics[width=.45\linewidth]{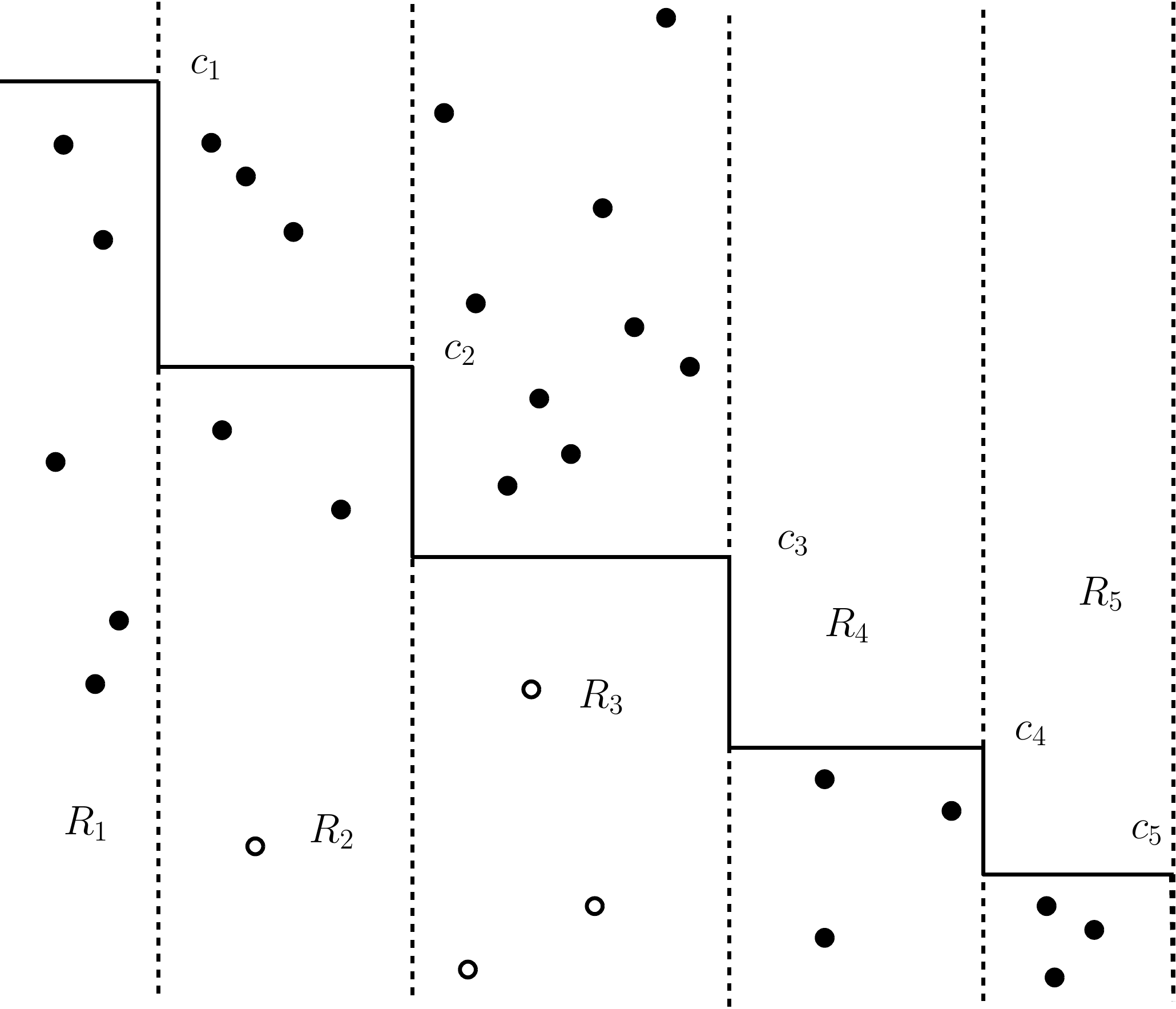}
    \caption{Example of a $t$-shallow cutting in two dimensions and its covering for $t=3$ and $d=3$. Points from $P'$ are depicted by hollow circles, all other points are depicted by filled circles. Dashed lines are boundaries of rectangles $R_i$. The hollow point in rectangle $R_2$ must be stored in conflict lists of $C_2$, $C_3$, and $C_4$. Therefore this point is in $P_1$. Hollow points in rectangle $R_3$ are in $P_2$ because $R_3$ contains over $t\cdot d$ points.}
    \label{fig:2d-shcut}
  \end{figure}
  Consider an arbitrary cell $C_i$ and points dominated by $c_i$.  Suppose that a point $p\in P$, dominated by $c_i$, is also dominated by $c_{i-l}$ for some $l\ge d$.  Since $z$-coordinates of corners increase when $i$ decreases, $p$ is dominated by all $c_j$, $i-l\le j\le i$. Hence $p$ is contained in at least $d$ cells and $p\in P_1$.
Consider  a point $p\in R_j$ for $i-d < j\le i$, such that $p$ is dominated by $c_i$. If $R_j$ contains at least $t\cdot d$ points, then $p\in P_2$.   
\end{proof}

\begin{lemma}
  \label{lemma:narrow4sided}
  There exists a data structure that answers narrow four-sided queries in $O(\log \log n+ k)$ time and uses
  $O(m\log\log n)$ bits of space, where $m$ is the number of points in the data structure. 
\end{lemma}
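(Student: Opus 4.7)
The plan is to mirror the compact three-dimensional dominance construction of Section~\ref{sec:slabdomin}, replacing the three-dimensional covering theorem (Theorem~\ref{theor:shallowpart}) by its two-dimensional counterpart, Theorem~\ref{theor:shallowpart2d}.  First I build a $t$-shallow cutting $\cC$ for the $(y,z)$-projection of the input point set $P$, choosing $t=\log^c n$ for a sufficiently large constant $c$; recall that the $x$-coordinates of $P$ lie in the narrow range $[1,\log^{\eps}n]$.  Then I apply Theorem~\ref{theor:shallowpart2d} with $d=\log n$, obtaining a subset $P'\subseteq P$ with $|P'|=O(m/\log n)$ and a family $\cR$ of $z$-unbounded rectangles such that, for every cell $C_j$ of $\cC$, $list(C_j)\setminus P'$ is covered by $O(d)$ rectangles of $\cR$, each of which contains $O(t\cdot d)$ points.

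The points of $P'$ are stored in a separate narrow four-sided structure obtained from Lemma~\ref{lemma:compact3sid}; because $|P'|=O(m/\log n)$, the space is $O(|P'|\log|P'|)=O(m)$ bits and the query time is $O(\log\log n+k)$.  The core of the data structure is the substructure associated with each cell $C_j$.  I represent every point $p\in list(C_j)\setminus P'$ by a triple consisting of (i) its $x$-coordinate, which requires $O(\log\log n)$ bits because $x\in[1,\log^{\eps}n]$; (ii) the index of its covering rectangle inside the list $rlist(C_j)$ of $O(d)$ rectangles, which requires $O(\log d)=O(\log\log n)$ bits; and (iii) the $y$-rank of $p$ inside that rectangle, which requires $O(\log(t\cdot d))=O(\log\log n)$ bits.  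Thus every point is encoded in $O(\log\log n)$ bits, and the cell $C_j$ only needs to answer narrow four-sided queries on a local set of $|P_j|=O(t)=\log^{O(1)}n$ points in rank space; such a query is handled in $O(k)$ time and $O(|P_j|\log\log n)$ bits by the natural five-sided extension of Lemma~\ref{lemma:small0}.  Recovering ambient $(y,z)$-coordinates of a reported point is delegated to the shared universal capped range-selection structure, exactly as in Sections~\ref{sec:slabdomin} and~\ref{sec:minima2}.

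A query $[i_1,i_2]\times[0,c]\times[0,d]$ is then processed as follows: a planar point-location query locates the cell $C$ of $\cC$ that contains $(c,d)$ in $O(\log\log n)$ time via Chan's orthogonal subdivision structure; the cell's substructure reports the matching points of $list(C)\setminus P'$ in $O(k)$ time; finally, the structure for $P'$ is queried in $O(\log\log n+k)$ time.  The union of the two answers is the required output.  For the space accounting, the shallow cutting can be constructed so that the total size of all conflict lists is $O(m)$; the per-cell substructures therefore consume $O(m\log\log n)$ bits in total.  The endpoints of the rectangles in $\cR$ and the planar subdivision used for point location contribute only $o(m)$ bits, while $P'$ adds $O(m)$ bits.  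The grand total is $O(m\log\log n)$ bits.

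The subtle part, which I expect to be the main obstacle, is designing the per-cell substructure so that it is simultaneously (i) compact, using only $O(\log\log n)$ bits per encoded point, (ii) capable of answering genuine four-sided (not merely dominance) queries on its $O(t)$ rank-space points, and (iii) able to recover original $(y,z)$-coordinates within the tight time budget.  The (rectangle, rank) encoding yields compactness; correctness of the small cell-local structure relies crucially on clause~(c) of Theorem~\ref{theor:shallowpart2d}, which guarantees that every in-range point outside $P'$ appears inside one of the $O(d)$ covering rectangles and therefore inside the cell's encoding; and routing coordinate reconstruction through a shared range-selection oracle avoids charging either an extra factor to the query time or an extra $\log n$ per point to the storage.
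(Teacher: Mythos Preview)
Your approach has a genuine gap in the choice of shallow cutting.  You build a single two-dimensional $t$-shallow cutting for the $(y,z)$-projection of \emph{all} of $P$, and to answer a query $[i_1,i_2]\times[0,c]\times[0,d]$ you locate the cell of $\cC$ containing $(c,d)$.  But $(c,d)$ may lie outside every cell of $\cC$: this happens whenever more than $t$ points of $P$ satisfy $p.y\le c$ and $p.z\le d$, regardless of their $x$-coordinates.  In that situation the true answer to the narrow query can still be arbitrarily small (even empty), because the additional constraint $p.x\in[i_1,i_2]$ may discard almost every point---for instance, nearly all points with small $y$- and $z$-coordinates could sit in child slots outside $[i_1,i_2]$.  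Your procedure then has no cell to work in and no fallback; returning \texttt{NULL} is wrong because $k$ need not exceed $\log n$.  The analogy with Section~\ref{sec:slabdomin} breaks precisely here: there the query \emph{is} the three-dimensional dominance query that defines the cutting, so ``$q$ outside every cell'' really does imply $k>t$.

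The paper avoids this by building a separate two-dimensional cutting for each slice $P_i=\{\,p\in P: p.x=i\,\}$, so the cuttings are aligned with the $x$-granularity of the query.  It then invokes the group machinery of~\cite{ChanNRT18} to combine the per-slice cuttings into a single structure that handles a whole range $[i_1,i_2]$ of $x$-values; Theorem~\ref{theor:shallowpart2d} is applied inside each slice, and the $O(\log\log n)$-bit encoding of a point additionally records its slice index $i$ and the cell identifier within the group (both fit because there are only $\log^{\eps}n$ slices and $O(\log^{2\eps}n)$ cells per group).  A secondary issue in your write-up is the circular reference to Lemma~\ref{lemma:compact3sid} for storing $P'$---that lemma itself relies on the present lemma; the paper instead stores $\bigcup_i P'_i$ in the $O(|P'|\log^{1+\eps}n)$-bit structure of~\cite{ChanLP11}.
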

We combine the approach from~\cite{ChanNRT18} with a compact representation of two-dimensional shallow cuttings. 

Let $P_i$ denote the set of points $p$ such that $p.x=i$. We consider projections of points in $P_i$ onto the $(y,z)$ plane  and construct a  2-d $t$-shallow cutting for $t=\log^3 n$.   Using Theorem~\ref{theor:shallowpart2d}, we construct a subset $P_i'\subset P_i$  such that $P'_i=O(|P_i|/d)$ for $d=\log^2 n$.
For each cell $C_j$, $list(C_j)\setminus P'_i$ is contained in $O(d)$ rectangles unbounded in $z$-direction.

All points from $P'_i$ are stored in the data structure from~\cite{ChanLP11} that uses $O(|P'|\log^{1+\eps} n)$ bits and supports queries in $O(\log \log n + k)$ time.  We follow the method of ~\cite{ChanNRT18} to support four-sided narrow queries on $\cup(P_i\setminus P'_i)$. The only difference with~\cite{ChanNRT18} is that points in every group $G$ (defined as in of~\cite{ChanNRT18}) are stored in the rank space. Additionally for every point $p$ in $G$ we store: (a) the index of $i$ of $P'_i$, such that $p\in P'_i$, (b) the identifier of the rectangle $R\in \cR$ that contains $p$,  (c) the $y$-rank of $p$ in $R$, and (d) the cell $C_f$, such that $p\in C_f$.

Every group $G$ contains points from $O(\log^{2\eps}n)$ different cells. Hence we can specify $C_f$ using $O(\log \log n)$ bits. Since
$rlist(C_f)$ consists of $O(d^2)$ rectangles by Theorem~\ref{theor:shallowpart2d}, we can specify $R$ using $O(\log d)=O(\log\log n)$ bits. We can store the index of $P_i$ n $O(\log\log n)$ bits because a node has $\log^{\eps}n$ children. And we can store the $y$-rank of $p$ in $R$ using $O(\log\log n)$ bits because the number of elements in $R$ is poly-logarithmic. Hence we spend $O(\log\log n)$ bits per point. Additionally we store the list of rectangles $rlist(C)$ for every cell $C$. All lists use $O(m\cdot d\log^3 n/t)=O(m)$ bits.  

We also need an $O(n\log\log n)$-word universal data structure for capped selection, implemented as in Lemma~\ref{lemma:smallsel}, part (b). When this data structure and the above information are available, we can retrieve the coordinates of a point in $O(\log\log n)$ time.
Thus, as explained in Section~\ref{sec:rankspace}, we can support queries on the rank-reduced points of  $G$ in $O(\log\log n)$ time per reported point. A general narrow four-sided query can be reduced to a query on a group $G$~\cite{ChanNRT18}. Hence, we can answer narrow four-sided queries in $O((k+1)\log\log n)$ time.

\begin{lemma}
  There exists a data structure that uses $O(n\log\log n)$ space and supports five-sided three-dimensional queries in $O((k+1)\log\log n)$ time, where $k$ is the number of reported points. The same data structure can be modified to support two-dimensional range minima queries in $O(\log\log n)$ time. 
\end{lemma}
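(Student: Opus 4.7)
The plan is to combine the recursive grid from Section~\ref{sec:recur} with the compact four-sided data structure of Lemma~\ref{lemma:compact3sid}, now that Lemma~\ref{lemma:narrow4sided} supplies the missing narrow four-sided component. Concretely, I would build the same two-dimensional slab grid: divide the $(x,y)$ plane into $\sqrt{n/\log^3 n}$ vertical and horizontal slabs of size $\sqrt{n\log^3 n}$, store a top data structure $D^t$ for the small-$z$ points of each cell (using~\cite{ChanLP11} in the range-minima variant and~\cite{KarpinskiN09} in the reporting variant), recurse into each slab with $\Omega(\log^8 n)$ points, and terminate recursion at sub-slabs of size $O(\log^8 n)$ using Lemma~\ref{lemma:small0}. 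The novelty relative to Section~\ref{sec:recur} is that in each slab I attach the \emph{four-sided} structure of Lemma~\ref{lemma:compact3sid} rather than a dominance structure, which is what lets me pay only $O(\log\log n)$ per reported point.

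Next I would describe the query. As in Section~\ref{sec:recur}, a five-sided query $[a,b]\times[h_1,h_2]\times[0,z]$ that crosses several slabs in both directions decomposes into at most four four-sided queries (two on the vertical slabs $V_a,V_b$ and two on the horizontal slabs $H_1,H_2$) plus one query to the top structure $D^t$ for the central aligned rectangle. Each four-sided query is answered in $O((k+1)\log\log n)$ time by Lemma~\ref{lemma:compact3sid}, and the top query is answered in $O(\log\log n + k)$ time, so the total query cost at one recursion level is $O((k+1)\log\log n)$. If the query lies entirely inside one slab we recurse. Since the recursion depth is $O(\log\log n)$, the naive bound would multiply, so to avoid this I would follow Section~\ref{sec:recur} and use a capped version: if $k>\log n$, fall back to Chazelle's $O(n\log\log n)$-word structure~\cite{Chazelle88}, which in that regime answers the query in $O(\log n\log\log n + k\log\log n)=O(k\log\log n)$ time.

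For the space bound I would argue level-by-level: by Lemma~\ref{lemma:compact3sid} the four-sided structure for a slab of size $m$ consumes $O(m\log m)$ bits; summed over all slabs at a fixed recursion level this is $O(n\log n)$ bits, i.e.\ $O(n)$ words per level. The universal range-selection table and the universal narrow-four-sided auxiliary structure are shared across all slabs and contribute once. Multiplying by $O(\log\log n)$ recursion levels gives $O(n\log\log n)$ words, as required. Rank-reducing each slab to its local universe (so that the slab structure really stores only $m$ points) is done exactly as in Section~\ref{sec:rankspace}, and each coordinate decoding or predecessor in the rank reduction costs $O(\log\log n)$, which only matches the per-point query cost and does not harm the bound.

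Finally, the range-minima version follows by the same template: replace the cell-level three-dimensional dominance reporting inside Lemma~\ref{lemma:compact3sid} with two-dimensional dominance minima (via a suitable modification of the shallow-cutting-based structure of Section~\ref{sec:slabdomin}), and replace $D^t$ by the $O(n)$-space top range-minima structure of~\cite{ChanLP11}. A two-dimensional range minima query decomposes into $O(1)$ four-sided dominance-minimum queries and one top query, each answered in $O(\log\log n)$ time, and we return the overall minimum. The main obstacle I expect is bookkeeping inside the four-sided structure: making sure that the narrow four-sided substructure in each internal node of the slab's range tree really uses only $O(\log\log n)$ bits per point, so that the per-level total stays at $O(m\log m)$ bits and the recursion closes; this is exactly what Lemma~\ref{lemma:narrow4sided} delivers, so once it is invoked the rest of the proof is just stitching together the ingredients.
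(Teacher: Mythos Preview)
Your proposal is correct and follows essentially the same approach as the paper: recursive grid from Section~\ref{sec:recur}, a four-sided structure (Lemma~\ref{lemma:compact3sid}, powered by Lemma~\ref{lemma:narrow4sided}) in every slab, rank-space reduction per slab, and the Chazelle fallback for $k>\log n$; the space analysis (\,$O(n\log n)$ bits per recursion level times $O(\log\log n)$ levels\,) and the range-minima adaptation also match. The only minor inaccuracy is your stated reason for capping: the four-sided slab structures terminate the decomposition, so recursion occurs only in the ``entirely inside one slab'' case and the per-level cost does not multiply by $k$; capping is needed because Lemma~\ref{lemma:compact3sid} handles only $(\log n)$-capped queries, exactly as the paper invokes it.
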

\begin{proof}
  We use recursive grid defined in Section~\ref{sec:recur} and store the data structure for four-sided queries of Lemma~\ref{lemma:compact3sid} in every slab. Points every slab are reduced to rank space.  The space usage of the data structure in bits is $S(n)=O(n\log n)+ 2\sqrt{n/\log^2 n}S(\sqrt{n\log^2 n})$. Hence $S(n)=O(n\log\log n)$, see e.g.,\cite{AlstrupBR00,Nekrich07algorithmica}.. 

  A five-sided query can be reduced to a query in a single slab or to at most four four-sided queries and one query to the top data structure. The query time  is the same as in Lemma~\ref{lemma:compact3sid}.

 As explained in Section~\ref{sec:slabdomin}, we can adjust our result to support range minima queries in time $O(\log\log n)$.
\end{proof}

\section{Data Structure for Capped Selection Queries}
\label{sec:smallsel}
\begin{lemma}
  \label{lemma:smallsel}
  Let $P$ be a set of $n$ two-dimensional points and $\cR$ be a set of $m_R$ rectangles. There exists a data structure that uses $O(s(n)+ m_R\log^{1+\eps} n)$  words and answers capped range  selection queries $(R,f)$ where $R\in \cR$  and $f=O(\log^{10} n)$, in time $q(n)$.  The following trade-offs between  $q(n)$ and $s(n)$ are possible:\\
  (a) $s(n)=O(n)$ and $q(n)=O(\log^{\eps}n)$\\
  (b) $s(n)=O(n\log\log n)$ and $q(n)=O(\log\log n)$\\
  (c) $s(n)=O(n\log^{\eps}n)$ and $q(n)=O(1)$\\
\end{lemma}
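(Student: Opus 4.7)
The plan is to build a compact range tree $T$ over $P$ keyed by $y$-coordinate and, for each $R\in\cR$, precompute a short canonical decomposition that reduces capped selection to an implicit merge of a small number of sorted runs. I would first fix a fan-out $\gamma$ for $T$ and store at each node $u$ the sorted-by-$x$ array $A(u)$ of points in $u$'s subtree. For $R=[a,b]\times[c,d]$ the $y$-interval $[c,d]$ is covered by $O(\gamma\log_\gamma n)$ canonical subtrees, and in each such canonical node $u$ the points of $P\cap R$ occupy one contiguous subinterval of $A(u)$. For every $R\in\cR$ I precompute the ordered list of these canonical nodes together with the two endpoint indices into $A(u)$ delimiting the corresponding subinterval; with $\gamma=\log^{\eps}n$ this is $O(\log^{1+\eps}n)$ words per rectangle, matching the additive term in the stated space bound.

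A capped selection query with $v\le\log^{10}n$ then reduces to the classical problem of selecting the $v$-th smallest element in the union of $O(\log^{1+\eps}n)$ disjoint sorted runs whose starting positions are already known. I would attach fractional-cascading links between neighboring runs so that, after the initial locates, each rank increment during the merge costs $O(1)$; bit-packed partial counts stored with each canonical subinterval route the selection to the correct run in the same asymptotic time. Variant (a) follows by using a fusion-tree predecessor on a linear-space version of $T$: total query time $O(\log^{\eps}n)$ in $O(n)$ space. For variant (b) I cascade a $O(\log\log n)$-time predecessor (of van Emde Boas / \Patrascu--Thorup type) through the $O(\log n/\log\log n)$ levels of a constant-fanout tree, storing one sorted copy per level, so that the initial locates dominate the $O(\log\log n)$ query time at a cost of $O(n\log\log n)$ space. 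For variant (c) I raise $\gamma$ and, on top of the previous decomposition, tabulate per rectangle the answer to each of the $O(\log^{10}n)$ possible $v$-values, packed $\Theta(\log n/\log\log n)$ to a word and shared through the $A(u)$ arrays; this amortizes to $O(n\log^{\eps}n)$ total space while reducing the query to a single table lookup.

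The main obstacle will be keeping the per-rectangle overhead pinned at $O(\log^{1+\eps}n)$ words while simultaneously achieving the promised query times. This forces the cascading links to live once on the shared arrays $A(u)$ rather than be replicated per $R$, and the endpoint indices stored with each canonical subinterval of each $R$ to be packed into $O(\log n)$ bits each. The hypothesis $v\le\log^{10}n$ is crucial throughout: it bounds the length of the implicit merge, makes the routing counts expressible in $O(\log\log n)$ bits so that $\Theta(\log n/\log\log n)$ of them pack into a word, and — for variant (c) — makes per-rectangle tabulation fit within the $O(\log^{\eps}n)$-words-per-point budget.
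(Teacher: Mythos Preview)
Your proposal has a genuine gap. By keying the range tree on $y$ and decomposing the $y$-interval $[c,d]$, you reduce capped selection to finding the $v$-th smallest element across $O(\log^{1+\eps}n)$ sorted runs whose $x$-values are \emph{interleaved}. Fractional cascading speeds up value-based search across catalogs, not rank-based selection across multiple arrays; the merge you sketch (``each rank increment \ldots\ costs $O(1)$'') is $\Omega(v)=\Omega(\log^{10}n)$ time, and any method that touches all runs is already $\Omega(\log^{1+\eps}n)$ --- neither meets the targets for (b) or (c). Your variant-(c) tabulation stores $\Theta(\log^{10}n)$ answers per rectangle; even at $O(\log\log n)$ bits each this is $\Theta(\log^{10}n\log\log n)$ bits, far exceeding the $O(\log^{1+\eps}n)$-word per-rectangle budget stated in the lemma. (Also, a ``constant-fanout tree'' has $\Theta(\log n)$ levels, not $O(\log n/\log\log n)$, so the space accounting for (b) is off.)

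The paper sidesteps the merge problem entirely by choosing the other axis. It builds the compact range tree on \emph{$x$}-coordinates and decomposes $[a,b]$ into $O(\log^{1+\eps}n)$ canonical pieces $R_1,R_2,\ldots$ that are laid out left to right. Storing the prefix counts $F_i=\sum_{j\le i}|P\cap R_j|$ --- this is exactly the $O(\log^{1+\eps}n)$ words per rectangle --- means the $f$-th leftmost point of $P\cap R$ lies in the \emph{single} piece $R_j$ with $F_{j-1}<f\le F_j$, at local rank $f'=f-F_{j-1}$; locating $j$ is a predecessor query on a polylog-size table, $O(1)$ time. Inside the one node $u_j$, a per-node ``range $\tau$-selection'' structure (from Navarro et al.\ / Gawrychowski--Nekrich, $O(\log\log n)$ bits per point, $O(1)$ time) returns the index in $S(u_j)$ of the $f'$-th leftmost point in the $y$-range. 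The only remaining step is one $\point(u_j,i)$ call on the compact range tree (ball inheritance), and the three trade-offs (a), (b), (c) are inherited verbatim from the known space/time trade-offs for that primitive --- no separate construction per variant is needed.
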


We store a compact range tree with node degree $\log^{\eps} n$ for some constant $\eps>0$  on a set $P$.  Recall that a standard range tree is a balanced tree on the $x$-coordinates of points. Every internal node stores the set of points $S(u)$ that contains all points whose $x$-coordinates are in the leaf descendants of $u$. Although a compact range tree does not store $S(u)$ in explicit form, it supports operations $\noderange(y_1,y_2,u)$ and $\point(u,i)$. The former operation identifies the range $[h_1..h_2]$  such that all points stored in $S(u)$ at positions $h_1$, $h_1+1$, $\ldots$, $h_2$ have $y$-coordinates in the interval $[y_1,y_2]$. The operation $point(u,i)$ returns the coordinates of the $i$-th point in $S(u)$ (assuming that points in $S(u)$ are sorted by their $y$-coordinates). 
Different trade-offs between the space usage $s_{\mathtt{comp}}(n)$ of the compact tree and the cost $t_{\mathtt{comp}}(n)$ of $\point(u,i)$ are possible: either (i) $s_{\mathtt{comp}}(n)=O(n)$ and $t_{\mathtt{comp}}(n)=O(\log^{\eps}n)$ or (ii) $s_{\mathtt{comp}}(n)=O(n\log\log n)$ and $t_{\mathtt{comp}}(n)=O(\log\log n)$ or (iii) $s_{\mathtt{comp}}(n)=O(n\log^{\eps} n)$ and $t_{\mathtt{comp}}(n)=O(1)$.

For every node in the range tree we store a data structure supporting range $\tau$-selection queries:  for any
$y$-range $[y_1,y_2]$ and  any $f\le \tau$,  we can return the index of the point with the $f$-th smallest $x$-coordinate in $S(u)[y_1..y_2]$. When $\tau=\log^{O(1)}n$, we can support range $\tau$-selection queries in $O(1)$ time using $O(\log\log n)$ bits per point~\cite{NavarroRS14,GawrychowskiN15}. 

Every covering rectangle $R=[a,b]\times [c,d]$ is divided into $O(\log^{1+\eps} n)$ smaller rectangles: we can represent $[a,b]$ as a union of $O(\log^{1+\eps}n)$ intervals $[a_l,b_l]$ where $a_l$ and $b_l$ are the leftmost and the rightmost leaf descendants of   some node $u_l$ in the range tree. Let $[c_i,d_i]=\noderange(c,d,u_i)$.  We store the coordinates of each $R_i=[a_i,b_i]\times [c_i,d_i]$,  and the number of points  $f_i=|P\cap R_i|$ for each $R_i$. We also compute the prefix sums $F_i=\sum_{j=1}^i f_i$ for all $i$.

A query $(R,f)$ is answered as follows.  We consider the decomposition of $R$ into rectangles $R_i$ and find the index $j$, such that $F_{j-1}< f \le F_j$. Let $f'=f-F_{j-1}$. Using the range selection data structure, we can find the index of the $f'$-th leftmost point in $S(u)[c'..d']$ where $[c',d']=\noderange(c,d)$.  Then we can obtain the point by answering the $\point$ query.

The query time is dominated by the operations on the range tree.  Hence we obtain the same space-time trade-offs for the capped range selection as for the compact range tree.


\no{
We construct a binary  range tree on $x$-coordinates of points.   For every node $u$ of the range tree we store the compact data structure supporting range selection queries. For every
$y$-range $[y_1,y_2]$ and we can return the index of the point with the $f$-th smallest $x$-coordinate for $f\le \log^7 n$. We also store a symmetric data structure that returns the index of the point with the $f$-th largest $x$-coordinate for every $y$-range $[y_1,y_2]$.  Both data structures are implemented using the result from ~\cite{???}, so that queries are supported in $O(1)$ time and the space usage is $O(\log\log n)$ bits per point.   

A query $([a,b]\times [c,d],\ell)$ is answered as follows: let $w$ denote the lowest common ancestor of the leaves holding $a$ and $b$. We find the node range $[c_r,d_r]$ of $[c,d]$ in the right child $u_r$ of $v$. Since all points have different $y$-coordinates, the number of
point with different $y$-coordinates in the $y$-range  of $u_r$ is $\ell_r=d_r-c_r+1$.
If $\ell_r\ge \ell$, we find the index $i$ of the point with the $\ell_r-\ell$-th smallest $x$-coordinate in $[c_r,d_r]$. Otherwise we find the point with the 
}

\section{Analysis of Four-Dimensional Range Reporting}
\label{sec:analysis4d}
We need to prove that $S=\sum_{i=0}^h \frac{1}{\log\alpha_i} =O(1)$.  We define the sequence $f(i)$ as follows: $f(0)=0$, $f(i)=\min\{\,x\,| \alpha_x\le \log(\alpha_{f(i-1)})\,\}$.  Let $\sigma_i=\sum_{j=f(i)}^{f(i+1)-1} \frac{1}{\log \alpha_i}$.  The sum $\sigma_i$ has $O(\log\log \alpha_{f(i)})$ terms. By definition of $f()$, $\alpha_j> \log(\alpha_{f(i)})$ for $f(i)\le j< f(i+1)$. Hence each term in $\sigma_i$ is smaller than $\frac{1}{(\log\log \alpha_{f(i)})^2}$  and $\sigma_i=O(\frac{1}{\log\log \alpha_{f(i)}})$.

We can represent $S$ as the sum of $\sigma_j$. By the above analysis $S=O(\sum_{f(i)\le h}\frac{1}{\log\log \alpha_{f(i)}})$. Let $l$ denote the number of terms in the latter sum. 
Let $\beta_i=\alpha_{f(i)}$. Then $S=O(S')$  where $S'=\sum_{i=0}^{l-1} \frac{1}{\log\log \beta_{l-i}}$. By definition of $f(i)$ $\beta_i\le \log \beta_{i-1}$  and $\log \log \beta_i< (1/2) \log\log \beta_{i-1}$.  Hence $\frac{1}{\log\log \beta_i}> \frac{2}{\log\log \beta_{i-1}}$.     Since $\beta_l=O(1)$, $\frac{1}{\log\log \beta_l}=O(1)$. Hence the sum $S'$ can be bounded by a decreasing geometric sequence with constant first term. Therefore $S=O(S')=O(1)$.

\section{Space-Efficient Four-Dimensional Range Reporting}
\label{sec:space4d}
In this section we describe a data structure with $O(n\log^{2+\eps} n)$ space and answers 4d orthogonal range reporting queries in optimal time.

We will say that a point $p$ is on a 4d-narrow grid if the fourth coordinate of $p$ is bounded by $\mu=\gamma^{(\log\log n)^2}$ where $\gamma=\log^{\eps}n$. A query $Q=[x_1,x_2]\times [y_1,y_2]\times [0,z]\times [z_1',z_2']$ is called a $(2,2,1,2)$-sided query or a $7$-sided query. The projection of $7$-sided $Q$ on $z$-axis is a half-open interval, the projections of $Q$ on $x$, $y$-, and $z'$-axes are closed intervals. We will show that $7$-sided queries can be supported in optimal time using $O(n\log^{1+\eps}n)$ space:
First we show that dominance queries on a 4d-narrow grid can be supported in $O((\log\log n)^2)$ time using $O(n\log^{\eps}n)$ space\footnote{To simplify the notation we sometimes ignore the time needed to report points in this section.  Whenever we say that reporting queries are supported in time $O(f(n))$, we impy the reporting time $O(f(n)+k)$.}. 
Then we apply a lopsided grid approach~\cite{ChanLP11} and obtain a data structure that supports $(2,2,1,2)$-sided queries on a narrow 4d-grid in $O((\log\log n)^2)$ time and uses $O(n\log^{\eps}n$ space.
Using range trees on the fourth coordinate with node degree $\mu$, we extend this result to a data structure that answers  $(2,2,1,2)$-sided queries in $O(\log n/\log\log n)$ time and $O(n\log^{1+\eps} n)$ space. Finally we obtain the  result for general four-dimensional reporting queries with optimal time and $O(n\log^{2+\eps})$ space.

\begin{lemma}
  \label{lemma:small4d-domin}
  For any $m\le n$ there exists a data structure that uses $O(m\log^{\eps} m)$ words and supports dominance queries on a 4d-narrow grid in time $O((\log\log n)^2)$, where $m$ is the number of points in the data structure. 
\end{lemma}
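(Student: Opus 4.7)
The plan is to apply the embedded-tree / shallow-cutting-hint approach of Section~\ref{sec:fourdim} to a range tree built on the fourth coordinate, exploiting the fact that the fourth coordinate has universe only $\mu=\gamma^{(\log\log n)^2}$, so that the underlying range tree has height only $(\log\log n)^2$ instead of $\log n$. First I would build a $\gamma$-ary range tree $T$ on the fourth coordinate; since $\log_\gamma \mu=(\log\log n)^2$, this is exactly its height. A 4d dominance query $q=(q_x,q_y,q_z,q_{z'})$ is one-sided in the fourth coordinate, so it traces a single root-to-leaf path in $T$, and the answer is the union of the outputs of 3d dominance queries (in $x,y,z$) on the canonical prefix $S(v,1,c(v)-1)$ at every node $v$ on the path, together with the dominance query at the final leaf.

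Second, to avoid paying $\Theta(\log\log n)$ for planar point location at each of the $(\log\log n)^2$ levels of $T$, I would propagate shallow-cutting hints as in Section~\ref{sec:fourdim}. Embed a tree $T^0$ into $T$ whose nodes lie at depths divisible by $\alpha_0=\log\log n$; then $T^0$ has height $\log\log n$ and node degree $\rho_0=\gamma^{\log\log n}$. For every node $u$ of $T^0$ and every prefix $[1..j]$ of its children, store a $t_0$-shallow cutting $\cC(u,1,j)$ of the 3d projection of the canonical set $S(u,1,j)$; similarly store $2t_0$-cuttings $\cE(v,1,j)$ at every $T$-node $v$. Using Lemma~\ref{lemma:contain} together with the parameter choice, attach to each cell of $\cC(u,1,j)$ a pointer to the containing cell of $\cE(v_f,l_f,r_f)$ for every piece $S(v_f,l_f,r_f)$ in the canonical decomposition of $S(u,1,j)$ into $T$-nodes strictly between $u$ and its $T^0$-children. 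Inside each bottom cell, store its conflict list compactly as in Lemma~\ref{lemma:small0} so that a local 3d dominance query costs $O(1+k)$.

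A query then descends $T^0$: at each $T^0$-node $u$ on the path we perform one planar point location on $\cC(u,1,j)$ via Lemma~\ref{lemma:domin3d-fast}, costing $O(\log\log n)$ per $T^0$-node and $O((\log\log n)^2)$ across the $\log\log n$ such nodes; the located cell then yields, in $O(1)$ per $T$-node by following the precomputed pointer, the containing cell of $\cE(v_f,l_f,r_f)$ for every $T$-node $v_f$ in the canonical decomposition, contributing another $O((\log\log n)^2)$ across the full path of length $(\log\log n)^2$. Finally, $O(1+k)$ reporting is done at each bottom cell, giving total query time $O((\log\log n)^2+k)$. For the space, the standard accounting of Section~\ref{sec:fourdim} applies: each point lies in at most $\rho_0^2$ canonical prefixes at $T^0$, so the $T^0$-cuttings sum to $O(m/\rho_0)=o(m)$ per level of $T^0$; the $T$-cuttings contribute $O(m\gamma)$ per level summed over $(\log\log n)^2$ levels; compact per-cell storage as in Theorem~\ref{theor:opttime} then yields the overall bound $O(m\log^{\eps}m)$.

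The main obstacle is the careful calibration of the cutting parameters $t_0$, the degrees $\rho_0$ and $\gamma$, and the compact in-cell representation, so that (i) Lemma~\ref{lemma:contain} supplies the containment chain $\cC(u,1,j)\hookrightarrow\cE(v_f,l_f,r_f)\hookrightarrow\cdots$ along both steps of hint propagation (requiring a suitable factor-of-two gap between cutting parameters across levels), and (ii) each bottom cell is small enough to answer its local 3d dominance in $O(1+k)$ while fitting in the $O(m\log^{\eps}m)$ space budget. Once these parameters are pinned down, the construction and its correctness are analogous to the proof of Theorem~\ref{theor:opttime}, with the only substantive change being the shorter tree height.
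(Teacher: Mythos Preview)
Your proposal has a genuine gap in the query-time analysis. You embed only a single tree $T^0$ with $\alpha_0=\log\log n$, follow pointers to cells of $\cE(v_f,l_f,r_f)$ at each $T$-node, and then claim ``$O(1+k)$ reporting is done at each bottom cell'' via Lemma~\ref{lemma:small0}. But a cell of $\cE(v_f,l_f,r_f)$ is a $2t_0$-shallow-cutting cell and therefore contains $\Theta(t_0)=\Theta(\gamma^{4\log\log n})=2^{\Theta((\log\log n)^2)}$ points, far beyond the $\log^{O(1)}n$ hypothesis of Lemma~\ref{lemma:small0}. If instead you insert the missing $\cC'(E)$ cutting of parameter $\gamma^{O(1)}$ and locate $q'$ in it (as in the ``Better Query Time'' paragraph of Section~\ref{sec:fourdim}), Lemma~\ref{lemma:domin3d-fast} gives only $O(\sqrt{\log_\gamma t_0})=O(\sqrt{\log\log n})$ per $T$-node, and with $(\log\log n)^2$ such nodes the total becomes $O((\log\log n)^{2.5})$, not $O((\log\log n)^2)$. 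A single level of hint propagation is exactly the ``Better Query Time'' construction, which saves only a $\sqrt{\log\log n}$ factor, not a full $\log\log n$.

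The paper closes this gap by keeping the \emph{entire} hierarchy $T^0,T^1,\ldots,T^h$ from the ``Optimal Query Time'' paragraph, not just $T^0$; it is this cascade that drives the per-$T$-node cost down to amortized $O(1)$ for all nodes of height above $\ell$. The other adjustment the paper makes, which your space accounting misses, is to set $\alpha_0=\log\log m$ (and $\ell=\log\log m$) rather than $\log\log n$, and to reduce the $m$ points to rank space first. This matters because $m$ may be much smaller than $\rho_0=\gamma^{\alpha_0}$ when $\alpha_0=\log\log n$, in which case a single $0$-node already has more canonical prefixes than can be indexed within the $O(m\log^{\eps}m)$ budget; rank reduction also lets point location at $0$-nodes run in $O(\log\log m)$ rather than $O(\log\log n)$, which is what makes the bottom-$\ell$-levels contribution $O((\log\log m)^2)\le O((\log\log n)^2)$.
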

\begin{proof}
  We use the same method as in Theorem~\ref{theor:opttime}, but we need to adjust some parameters for the case when $m$ is small.  Recall that we can directly apply the strategy of Theorem~\ref{theor:opttime} only in the case when $m\ge \gamma^{2\rho_0}$: otherwise it is not possible to store even $\Theta(1)$ words for all possible node ranges of $0$-nodes.
Fortunately we can first reduce all points to the rank space and then answer queries in $0$-nodes in $O(\log\log m)$ time per node. Hence  the node degree of $T_0$ can be decreased. A more detailed description is below. 

  We set $\alpha_0=\log\log m$ and $\ell=\log\log m$. As before, $\alpha_i=\sqrt{\alpha_{i-1}}\log^2 \alpha_{i-1}$, $\rho_i=\gamma^{\alpha_i}$ and $t_i=\rho_i^4$.  The range tree on the fourth coordinate has node degree $\gamma=\log^{\eps} n$ and we store the same data structures as in Section~\ref{sec:fourdim} in the nodes of $T$.  Points of the input set are stored in the rank space. Hence we can support orthogonal point location queries in $0$-nodes in time $O(\log \log m)$.  The search procedure is the same as in Section~\ref{sec:fourdim}. In order to answer a query, we need to locate $q'$ in $O(\log_{\gamma} \mu)$ shallow cuttings $\cC(u,l,r)$. For all nodes $u$, such that the height of $u$ does not exceed $\ell$
  we answer a point location query in $O(\log\log m)$ time per node.  For all other nodes, we spend $O(1)$ time per node.
  Hence we can find the cells $C_u$ of $\cC(u,l,r)$ for all $u$ in time $O(\log_{\gamma}\mu +(\log\log m)^2)=O(\log_{\gamma}\mu)$.
  When all $C_u$ are known we can answer the three-dimensional dominance query in $O(1)$ time per point.    To transform the query to the rank space, we need to answer $O(1)$ successor queries. This takes additional $O(\log\log n)$ time. Hence the total query time is $O(\log_{\gamma}\mu+\log\log n)=O((\log\log n)^2)$. The space usage  can be analyzed in the same way as in Section~\ref{sec:fourdim}. 
\end{proof}

\begin{lemma}
  \label{lemma:small4d-7side}
  For any $m\le n$ there exists a data structure that uses $O(m\log^{4+\eps}m)$ words and supports $(2,2,1,2)$-sided queries on a 4d-narrow grid in time $O((\log\log n)^2)$, where $m$ is the number of points in the data structure. 
\end{lemma}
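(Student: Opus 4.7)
The plan is to reduce the 7-sided $(2,2,1,2)$-sided query to a 4D dominance query (handled by Lemma~\ref{lemma:small4d-domin}) by applying three nested range trees, one on each coordinate ($x$, $y$, $z'$) that currently carries a two-sided restriction. Each range tree absorbs one two-sided restriction, so after the three reductions the query becomes ``report points with $z\le z$'' inside a canonical subset, which is a degenerate special case of the 4D-dominance problem solved by Lemma~\ref{lemma:small4d-domin}.

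Concretely, I would build a range tree $T_x$ on $x$-coordinates with node degree $\gamma=\log^{\eps}n$, at every node $u$ of $T_x$ a nested range tree on the $y$-coordinates of $S(u)$, and at every node of that tree a further nested range tree on the $z'$-coordinates.  A query $[x_1,x_2]\times[y_1,y_2]\times[0,z]\times[z'_1,z'_2]$ decomposes at the LCA of each range tree into a constant number of canonical subsets plus the standard two descending paths; applying this recursively yields $O(1)$ innermost canonical subsets in which Lemma~\ref{lemma:small4d-domin} is invoked.  The LCAs and the predecessors of the six endpoints are located in $O(\log\log n)$ time each using integer-predecessor structures on the narrow-grid coordinates, so the combined navigation cost of the three range trees is $O(\log\log n)$, and the single inner dominance call contributes $O((\log\log n)^2)$.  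Because $z'\in[1,\mu]$ with $\mu=\gamma^{(\log\log n)^2}$, the innermost range tree has depth $O((\log\log n)^2)$ and contributes only a polylog-log factor to the storage.

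The main obstacle is the space accounting.  Each of the outer two range trees (on $x$ and on $y$) multiplies the storage of whatever it contains by a factor of $O(\log m)$, because every point belongs to $O(\log m)$ canonical sets across all levels, plus an additional $O(\log^{\eps}m)$ factor at each node to support the ``narrow'' queries on ranges of at most $\gamma$ children (using the compact narrow-query structures of Section~\ref{sec:slabdomin} and Lemma~\ref{lemma:compact3sid}).  The innermost invocation of Lemma~\ref{lemma:small4d-domin} itself contributes a further $O(\log^{\eps}m)$ factor per point, and the $z'$-tree contributes only a $(\log\log n)^{O(1)}$ factor.  Multiplying through and re-absorbing all $\eps$-terms yields the claimed bound $O(m\log^{4+\eps}m)$ words.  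The delicate point is to verify that at the bottom of the triple recursion the canonical subsets really reduce the query to the form supported by Lemma~\ref{lemma:small4d-domin}: all three two-sided restrictions are absorbed by the canonical decompositions, only the upper bound on $z$ remains, and the subset itself lies on a 4d-narrow grid (since narrowness in $z'$ is preserved under restricting to a canonical subset).
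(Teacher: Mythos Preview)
Your approach—nested range trees on the three two-sided coordinates, reducing to 4D dominance and invoking Lemma~\ref{lemma:small4d-domin}—is exactly what the paper does, stated there in one line as ``apply the standard one-sided-to-two-sided transformation (an $O(\log m)$ space factor, no query-time increase) four times to Lemma~\ref{lemma:small4d-domin}.'' Your $\gamma$-ary trees and narrow-range machinery are unneeded here (binary trees already give $O(1)$ innermost calls and a clean $O(\log m)$ factor per coordinate), and note that the innermost query is a full 4D dominance query (one-sided in all four coordinates), not merely ``$z\le z$'' as your phrasing suggests—but since that is precisely what Lemma~\ref{lemma:small4d-domin} handles, your conclusion stands.
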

\begin{proof}
 Using standard techniques, we can extend  a data structure that answers queries of the form $[0,b]\times [0,c]\times [0,d]\times [0,e]$ to a data structure that answers queries of the form $[a,b]\times [0,c]\times [0,d]\times [0,e]$. The transformation does not increase the query time and increases the space usage by $O(\log m)$ factor. We can use this technique for any coordinate. Applying this transformation four times to Lemma~\ref{lemma:small4d-domin}, we obtain the result of this lemma. 
\end{proof}

\begin{lemma}
  \label{lemma:4d6side}
  For any $m\le n$ there exists a data structure that uses $O(m\log^{\eps}n)$ words and supports $(2,1,1,2)$-sided queries on a 4d-narrow grid in time $O((\log\log n)^2)$, where $m$ is the number of points in the data structure. 
\end{lemma}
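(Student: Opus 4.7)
The starting point is Lemma~\ref{lemma:small4d-domin}, which gives a 4d dominance structure on the narrow grid in $O(m\log^{\eps}n)$ space and $O((\log\log n)^2)$ time.  The $(2,1,1,2)$-sided query adds two sides over dominance (an upper bound in $x$ and a lower bound in $z'$), and the generic ``add one side'' tool used inside Lemma~\ref{lemma:small4d-7side} costs a $\log m$ factor per added side, which would leave us with $\Omega(m\log^{2}m)$ words instead of the target $O(m\log^{\eps}n)$.  My plan is to apply the lopsided recursive-grid recipe of Chan--Larsen--P\v{a}tra\c{s}cu~\cite{ChanLP11} on the $x$-axis, and use shallow-cutting-based representatives to populate the top-level structure, so that each representative costs only $O(1)$ amortized words and the $z'$-interval is handled internally by Lemma~\ref{lemma:small4d-7side} restricted to that small top-level set.

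\textbf{Construction and query.}  Sort the points by $x$-coordinate and partition them into slabs of $\sqrt{m\log^c n}$ points each, for a constant $c$ to be fixed; recursively equip every slab with the $(2,1,1,2)$-sided structure of this lemma.  Build a top structure $D^t$ containing, for every slab, one representative per cell of a $t$-shallow cutting ($t=\log^{O(1)}n$) of the slab's projection to the three-dimensional $(y,z,z')$-space.  This yields $|D^t|=O(m/\sqrt{\log^c n})$, so equipping $D^t$ with Lemma~\ref{lemma:small4d-7side}'s full $(2,2,1,2)$-sided structure contributes only $O(m)$ words, the $\log^{4+\eps}$ per-point overhead being absorbed by the polylogarithmic shrinkage of $|D^t|$.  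A query $[x_1,x_2]\times[0,y]\times[0,z]\times[z_1',z_2']$ decomposes, exactly as in Section~\ref{sec:recur}, into one middle piece delegated to $D^t$ and at most two boundary pieces answered recursively.  The per-level query cost is $O((\log\log n)^2)$, and the shallow-cutting cell located at the top is passed down to each recursive level via Lemma~\ref{lemma:contain}, eliminating repeated point-location searches and keeping the total query time at $O((\log\log n)^2)$.

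\textbf{Main obstacle.}  The delicate points are correctness of the representative choice and the space recurrence.  For correctness, one representative per slab does not suffice, since the $z'$-interval can single out points of the slab that are not dominated by that representative; using one representative per cell of a 3d shallow cutting of the slab's $(y,z,z')$-projection, as justified by Theorem~\ref{theor:shallowpart}, does suffice to witness every answer of size at most $t$.  Answers larger than $t$ are handled in the capped-query spirit of Theorem~\ref{lemma:capped5side} by returning \emph{Null} and deferring to a slower fallback.  For the space, one solves the recurrence $S(m)=O(m)+2\sqrt{m/\log^c n}\,S(\sqrt{m\log^c n})$ by the same bit-level accounting as in Section~\ref{sec:recur} (where the per-point cost is carried as $O(\log\log n)$ bits, not a full word, and the recursion depth is $O(\log\log n)$); the correct choice of $c$, together with compact encoding of representatives and of pointers into shallow-cutting cells, is what keeps the final bound at $O(m\log^{\eps}n)$ words.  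Formalizing this bit-level accounting is the bulk of the remaining work.
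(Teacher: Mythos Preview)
Your proposal has a genuine correctness gap in the top structure. You place, for each $x$-slab $V$, one representative per cell of a $t$-shallow cutting of the projection of $V$ onto $(y,z,z')$, and claim via Theorem~\ref{theor:shallowpart} that this suffices to witness every answer of size at most $t$. But a $t$-shallow cutting (and Theorem~\ref{theor:shallowpart}) is a tool for \emph{dominance}: its cells have the form $[0,a]\times[0,b]\times[0,c]$, and a point $q$ is covered by some cell only when $q$ dominates at most $t$ points. The $(2,1,1,2)$ query restricted to a slab is $[0,y]\times[0,z]\times[z'_1,z'_2]$ in $(y,z,z')$; with $z'_1>0$ this is not a dominance range. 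Even when the true answer $\{p\in V:p.y\le y,\ p.z\le z,\ z'_1\le p.z'\le z'_2\}$ is empty, the point $(y,z,z'_2)$ may dominate arbitrarily many points of $V$ (namely all those with $z'<z'_1$) and hence lie in no cell of the cutting. Your top structure then contains no witness for $V$, and the capped-query fallback does not help, since the failure is not that the answer is large.

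The paper handles the two-sided $z'$ constraint by exploiting the narrowness of the $z'$-axis directly rather than through shallow cuttings. It uses a lopsided grid: a tree on $x$ with huge degree $A=2^{\log^{1-\eps}n}$ (hence constant depth), together with $y$-rows of size $A\tau$; and for \emph{every} pair $(z'_1,z'_2)$ it stores a separate three-dimensional top structure $D^t[z'_1,z'_2]$ holding, per grid cell, the $\log n$ smallest-$z$ points whose $z'$ lies in $[z'_1,z'_2]$. There are only $\mu^2$ such pairs on the narrow grid, and with $\tau=\mu^5$ all $\mu^2$ copies together use $o(m)$ space; each $D^t[\cdot,\cdot]$ now answers a genuinely 3d five-sided query. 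This enumeration over $z'$-ranges is precisely the ingredient your plan is missing. A secondary concern is your query time: the $\sqrt{m}$-style recursion has depth $\Theta(\log\log n)$ and each level invokes Lemma~\ref{lemma:small4d-7side} at cost $O((\log\log n)^2)$, giving $O((\log\log n)^3)$; your appeal to Lemma~\ref{lemma:contain} does not explain how to avoid the point-location cost inside the top structure $D^t$ at each level. The paper's lopsided recursion has $O(1)$ depth, so this issue does not arise there.
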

\begin{proof}
  Let $A=2^{\log^{1-\eps}n}$ and $\tau=\mu^5$. We store a tree with node degree $O(A)$ on $x$-coordinates of points. Every tree leaf contains $O(A)$ points. Let $S(u)$ denote the set of points stored in leaf descendants of a node $u$.  We divide each $S(u)$ into columns and rows.  A point $p\in S(u)$ is assigned to column $V_i$ if it is stored in the $i$-th child of $u$.  We also divide $S(u)$ into $n/(A\cdot \tau)$ rows of size $O(A\cdot \tau)$.  An intersection of a row and a column is called a grid cell. For every range $[z'_1,z'_2]$ of $z'$-coordinates, we keep a top data structure $D^t[z_1,z_2]$ organized as follows. For every cell $G$, we consider all points $p$ in $G$ such that $p.z'\in [z'_1,z'_2]$,  and select $\log n$ points  with the smallest $z$-coordinates. All selected points are stored in the data structure $D^t[z'_1,z'_2]$. $D^t[z'_1,z'_2]$ supports three-dimensional five-sided range reporting queries:  given a query $[a,b]\times [c,d]\times [0,e]$, we can report all $p\in D^t[z'_1,z'_2]$  satisfying $p.x\in [a,b]$, $p.y\in [c,d]$, and  $p.z\le e$. Each $D^t[,]$ contains $O((n/\tau)\log n)$ points. Using the result from~\cite{ChanLP11}, each $D^t[\cdot,\cdot]$ can be implemented in $O((n/\tau)\log^{1+\eps} n)$ space so that queries are answered in $O(\log\log n)$ time.

  Each row that contains at least $\tau$ points and every leaf node that contains at least $\tau$ points, is recursively divided in the same way. If a row or a leaf node contains at most $\tau$ points, we keep all its points in the data structure of Lemma~\ref{lemma:small4d-7side}.

  To answer a query $Q=[a,b]\times [0,d]\times [0,e]\times [f,g]$, we find the lowest common ancestor of the leaves that contain $a$ and $b$ respectively. If $u$ is an internal node, we answer a query on $S(u)$. Since $u$ is the lowest common ancestor of $a$ and $b$, $a$ ad $b$  are stored in different columns $C_a$ and $C_b$ of $S(u)$. If the query is entirely contained in one row $R_0$, we answer the query using the recursive data structure for $R_0$. Otherwise we divide the query into at most four parts. We answer four-dimensional dominance queries $[a,+\infty)\times [0,d]\times [0,e]\times [f,g]$ and $[0,b]\times [0,d]\times [0,e]\times [f,g]$ in columns $C_a$ and $C_b$ respectively. Let $l$ be the largest index, such that $Q$ overlaps with the row $R_l$. We answer a query $[a,b]\times [0,d]\times [0,e]\times [f,g]$ using the recursive data structure for the row $R_l$.  Finally we answer the central query $Q'=Q\setminus (R_l\cup C_a\cup C_b)$. $Q'$ is the part of the query range that is not in included into the row $R_l$ or the columns $C_a$ and $C_b$.  This query can be answered using the top data structure $D^t[f,g]$. 
  
  The total query time satisfies the recursion $q(n)=q(2^{\log^{1-\eps}n})+ O(\log\log n)$. Since the recursion depth is a constant and the query time in the base case is $O(\log\log n)$, $q(n)=O(\log\log n)$. The space usage in bits satisfies the recursion $s(n)=O(n\log^{1+\eps}n)+ \log^{\eps}n\cdot s(2^{1-\log^{\eps}n})$. Let $r(n)=s(n)/n$, then $r(n)=O(\log^{1+\eps}n) + r(2^{1-\log^{\eps} n})$. In the base case $r(n)=O(\log^{1+4\eps}n)$. Since the recursion depth is constant, $r(n)=O(\log^{1+4\eps}n)$ Hence the space usage is $O(n\log^{4\eps}n)$ words.  If we replace $\eps$ with $\eps'=\eps/4$ in the above proof, we obtain the desired result. 
\end{proof}

\begin{lemma}
  \label{lemma:4d7side}
  For any $m\le n$ there exists a data structure that uses $O(m\log^{\eps}n)$ words and supports $(2,2,1,2)$-sided queries on a 4d-narrow grid in time $O((\log\log n)^2)$, where $m$ is the number of points in the data structure. 
\end{lemma}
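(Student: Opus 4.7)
The plan is to mirror the construction of Lemma~\ref{lemma:4d6side} almost verbatim, but build the lopsided tree on $y$ instead of $x$, and use Lemma~\ref{lemma:4d6side} itself as the per-column data structure. Intuitively, an LCA split along $y$ peels off the second $2$-sided coordinate: the two border subqueries become $(2,1,1,2)$-sided and fall directly to Lemma~\ref{lemma:4d6side}.

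I would set $A=2^{\log^{1-\eps}n}$ and $\tau=\mu^5$, and build a tree with node degree $O(A)$ on $y$-coordinates, with $O(A)$ points per leaf. At every internal node $u$, the plan is to partition $S(u)$ into $y$-columns $V_i$ (one per child) and into $|S(u)|/(A\tau)$ rows of $O(A\tau)$ points sorted by $z$, where cells are the row--column intersections. For each canonical $z'$-range $[z'_1,z'_2]$ I would store a top structure $D^t[z'_1,z'_2]$ that keeps the $\log n$ smallest-$z$ points of each cell whose $z'$ lies in $[z'_1,z'_2]$ and answers three-dimensional five-sided queries $[a,b]\times[c,d]\times[0,e]$ using the structure of~\cite{ChanLP11}. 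Inside every column $V_i$ I would keep two instances of Lemma~\ref{lemma:4d6side}: one supporting $(2,1,1,2)$-sided queries with $y$ bounded from above, and a second (obtained by negating $y$-coordinates) supporting $y$ bounded from below. Rows with more than $\tau$ points and tree leaves with more than $\tau$ points would be handled by recursively applying the same construction; any set of at most $\tau$ points would be stored directly with Lemma~\ref{lemma:small4d-7side}.

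To answer a query $Q=[a,b]\times[c,d]\times[0,e]\times[f,g]$ I would locate the LCA $u$ of the $y$-leaves containing $c$ and $d$, identify the boundary columns $V_L\ni c$ and $V_R\ni d$, and find the topmost row $R_l$ (in $z$-order) intersected by $Q$. The query then splits into four subqueries: $[a,b]\times[c,+\infty)\times[0,e]\times[f,g]$ on $V_L$ and $[a,b]\times(-\infty,d]\times[0,e]\times[f,g]$ on $V_R$, both $(2,1,1,2)$-sided and answered by the two column instances of Lemma~\ref{lemma:4d6side}; the full $(2,2,1,2)$-sided query on $R_l$, answered recursively; and the central part $Q\setminus(V_L\cup V_R\cup R_l)$, answered by $D^t[f,g]$ as a three-dimensional five-sided query on the middle $y$-range and on the $z$-range below $R_l$. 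Each subquery costs $O((\log\log n)^2)$, and because every recursive step shrinks the point count from $n$ to at most $2^{\log^{1-\eps}n}$, the recursion has constant depth and the total query time is $O((\log\log n)^2)$.

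The main obstacle I anticipate is the space bound. The top structures at one level sum to $o(n)$ bits as in Lemma~\ref{lemma:4d6side}, but each column now carries \emph{two} copies of a Lemma~\ref{lemma:4d6side} instance, contributing $O(|V_i|\log^{\eps}n)$ words per column and $O(n\log^{\eps}n)$ words across the $A$ columns at the $y$-tree root. One must then check that summing over the $O(\log^{\eps}n)$ internal levels of the $y$-tree and unrolling the constant-depth recursion on rows and tree-leaves only inflates the total by a $\log^{O(\eps)}n$ factor. This follows from the same recurrence form as in Lemma~\ref{lemma:4d6side} (base case dominated by Lemma~\ref{lemma:small4d-7side}, geometrically shrinking point counts), and the extra factor is absorbed into the stated $O(m\log^{\eps}n)$ bound by replacing $\eps$ with $\eps/c$ for a sufficiently large constant $c$ throughout the construction.
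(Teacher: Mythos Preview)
Your proposal is correct and matches the paper's intended approach. The paper's own proof is the one-liner ``We can use the same lopsided grid and the same method as in Lemma~\ref{lemma:4d6side},'' and what you have spelled out---building the lopsided tree on the newly two-sided coordinate so that the two border-column subqueries drop to $(2,1,1,2)$-sided and are handled by Lemma~\ref{lemma:4d6side}, with rows/leaves recursed and the central part handled by the top structures---is exactly that method; your observation that the extra $\log^{O(\eps)}n$ factors are absorbed by shrinking $\eps$ is also the standard fix the paper relies on.
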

\begin{proof}
  We can use the same lopsided grid and the same method as in ~\ref{lemma:4d6side}.
\end{proof}

\begin{theorem}
  \label{theor:4d-space}
  There exists an $O(n\log^{1+\eps} n)$ space data structure that answers $(2,2,1,2)$-sided queries in $O(\log n/\log\log n+k)$ time.\\
  There exists an $O(n\log^{2+\eps} n)$ space data structure that answers four-dimensional orthogonal range reporting  queries in $O(\log n/\log\log n+k)$ time.
\end{theorem}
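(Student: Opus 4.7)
To establish the first statement, I would reuse the template of Section~\ref{sec:fourdim}: build a range tree $T'$ on the fourth coordinate with degree $\mu=\gamma^{(\log\log n)^2}$, and at each internal node $u$ replace every point's fourth coordinate by the index of the child of $u$ containing it. This maps $S(u)$ onto a 4d-narrow grid, so I can install on each $S(u)$ the structure of Lemma~\ref{lemma:4d7side}, which costs $O(|S(u)|\log^\eps n)$ words and answers $(2,2,1,2)$-sided narrow-grid queries in $O((\log\log n)^2)$ time. A $(2,2,1,2)$-sided query on the whole input decomposes canonically in $T'$ into $O(\log_\mu n)=O(\log n/(\log\log n)^3)$ node-ranges, each becoming a single narrow-grid query on one stored structure. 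The running time $O(\log n/\log\log n)$ and the space bound $O(n\log^{1+\eps}n/(\log\log n)^3)\subseteq O(n\log^{1+\eps}n)$ then follow by summing over the $O(\log_\mu n)$ levels of $T'$.

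For the second statement, my plan is to install a range tree $T$ on the third coordinate $z$ with degree $\gamma=\log^\eps n$ and, at every internal node $u$ of $T$ and every consecutive child-range $[l,r]$, store two copies of the first-statement structure over $S(u,l,r)$: a standard copy and a mirrored copy with the $z$-axis reversed. A general 4d query $[x_1,x_2]\times[y_1,y_2]\times[z_1,z_2]\times[z'_1,z'_2]$ is then reduced to three first-statement sub-queries by locating the deepest $w\in T$ whose children split $z_1$ and $z_2$ (call them $w_l,w_r$, $l<r$) and issuing
\begin{itemize}
\item a $(2,2,0,2)$-sided query on $S(w,l+1,r-1)$ (the middle children, whose $z$-ranges lie strictly inside $[z_1,z_2]$), by calling the standard structure with $z$-bound $+\infty$;
\item a $(2,2,1,2)$-sided query on $S(w_l)$ with $z\ge z_1$, via the mirrored structure;
\item a $(2,2,1,2)$-sided query on $S(w_r)$ with $z\le z_2$, via the standard structure.
\end{itemize}
Since the three subsets partition $P\cap Q$, the query time is three times the first-statement bound, namely $O(\log n/\log\log n+k)$.

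For the space, I would argue that a fixed point $p$ lies in $O(\gamma^2)$ child-ranges $(l,r)$ at each ancestor $u$ of its leaf in $T$ (the pairs with $l\le c_u(p)\le r$), and has $O(\log_\gamma n)$ such ancestors, so it is stored in $O(\gamma^2\log_\gamma n)$ first-statement copies. Each copy uses $O(\log^{1+\eps}n/(\log\log n)^3)$ words per point, giving total space $O(n\cdot\gamma^2\log_\gamma n\cdot\log^{1+\eps}n/(\log\log n)^3)=O(n\log^{2+3\eps}n/(\log\log n)^4)$; rescaling $\eps$ by a constant factor absorbs the exponent into the advertised $O(n\log^{2+\eps}n)$ bound.

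The main obstacle is arranging things so that a single three-query decomposition at the splitting node $w$ is \emph{enough}---so that we never have to recurse further into $T$. This is what forces us to store first-statement structures over \emph{all} $O(\gamma^2)$ child-ranges of every node $u$ (rather than one per node), and checking that this blow-up still fits into $O(n\log^{2+\eps}n)$ (rather than the naive $O(n\log^{3+\eps}n)$ of a by-node range-tree decomposition) is the one delicate calculation. The correctness of the three-piece decomposition, by contrast, is routine: the middle children's $z$-ranges lie between $z_l^{\max}$ and $z_r^{\min}$ (hence inside $[z_1,z_2]$), and the two boundary pieces are intrinsically $1$-sided in $z$, which the first-statement structure handles exactly.
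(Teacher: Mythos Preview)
Your treatment of the first statement is essentially identical to the paper's: a degree-$\mu$ range tree on the fourth coordinate, the child-index rewrite to land on a narrow grid, and Lemma~\ref{lemma:4d7side} at each node.

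For the second statement your argument is correct, but it takes a noticeably heavier route than the paper. The paper simply invokes the textbook one-dimension lift: build a \emph{binary} range tree on $z$ and store at each node one copy of the first-statement structure together with a $z$-mirrored copy. Given a query, locate the lowest common ancestor $w$ of the leaves for $z_1$ and $z_2$; then issue one mirrored $(2,2,1,2)$ query (with $z\ge z_1$) on the left child of $w$ and one standard $(2,2,1,2)$ query (with $z\le z_2$) on the right child. Two calls suffice, so the query time is unchanged, and each point sits in $O(\log n)$ nodes, giving the $O(\log n)$ space blow-up directly. By contrast, you use a degree-$\gamma$ tree with first-statement structures on all $O(\gamma^2)$ child-ranges and a three-piece decomposition at the split node. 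This does yield the same bounds after your $\eps$-rescaling, but the child-range blow-up and what you call ``the one delicate calculation'' are entirely avoidable here: the binary-tree reduction needs no per-range structures and no arithmetic with $\gamma$ at all. Your approach buys nothing extra in this setting; it would be the right instinct if one needed to shave a $\log\log n$ factor off the height, but the first-statement query cost already tolerates the full $O(\log n)$ height of a binary tree.
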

\begin{proof}
  To prove the first statement, we construct a range tree $T_{\mu}$  with node degree $\mu$ on the fourth coordinate.  We keep the data structure of Lemma~\ref{lemma:4d7side} that supports $7$-sided queries on the narrow grid in every node of $T_{\mu}$. This data structure keeps all points stored in the node $u$ with the following change: the fourth coordinate of every point is replaced by the index of the child where $p$ is stored; that is, for each $p\in S(u)$ we replace $p.z'$ with $p.in$ such that $p\in S(u_{p.in})$.   Now any $7$-sided query can be answered by  answering $O(\log_{\mu} n)=O(\log n/(\log\log n)^3)$ queries to node data structures. Given a query $[a,b]\times [c,d]\times [0,e]\times [f,g]$ we can represent $[f,g]$ as a union  of $O(\log_{\mu}n)$ node ranges $S(u,i,j)$ where $u$ is a node on the path from $f$ to the lowest common ancestor of $f$ and $g$ (resp. on the path from $g$ to the lowest common ancestor of $f$ and $g$). We can find all $p\in S(u,i,j)$ such that $a\le p.x \le b$, $c\le p.y \le d$ and $p.z\le e$ using the data structure for narrow $7$-sided queries on $S(u)$. Each one of $O(\log_{\mu} n)$ queries takes $O((\log\log n)^2)$ time by Lemma~\ref{lemma:4d7side}. Hence the total query time is $O(\log n/\log\log n)$. Every point is stored in $O(\log_{\mu} n)$ nodes; the data structure in each internal node $u$ uses $O(\log^{\eps}n)$ words per point. Hence $T_{\mu}$ with all additional structures uses $O(n\log^{1+\eps} n)$ words of space.  
  
  The result for $(2,2,1,2)$-sided queries can be extended to the data structure supporting general $(2,2,2,2)$-sided queries using the standard range tree. The query time remains unchanged and the space usage increases by $O(\log n)$ factor.
  Hence, we can support four-dimensional orthogonal range reporting queries in $O(\log n/\log \log n)$ time and $O(n\log^{2+\eps} n)$ space. 
\end{proof}

We can also generalize our result to $d$-dimensional orthogonal range reporting queries for any $d\ge 4$.
\begin{theorem}
  \label{theor:dspace}
    For any $d\ge 4$ there exists an $O(n\log^{d-2+\eps} n)$ space data structure that answers $d$-dimensional orthogonal range reporting  queries in $O((\log n/\log\log)^{d-3} n+k)$ time.\\
\end{theorem}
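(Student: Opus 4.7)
The plan is to prove Theorem~\ref{theor:dspace} by induction on $d$, treating Theorem~\ref{theor:opttime2} as the base case $d=4$. The inductive step turns a $(d-1)$-dimensional data structure into a $d$-dimensional one by layering a single high-degree range tree on the extra coordinate, exactly as Theorem~\ref{theor:opttime2} extends Theorem~\ref{theor:opttime} via a range tree on the fourth coordinate.

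For the inductive step, I assume that for every $m$ there is an $O(m\log^{d-3+\eps} m)$-space data structure answering $(d-1)$-dimensional orthogonal range reporting queries in $O((\log n/\log\log n)^{d-4}+k)$ time. On the $d$-th coordinate I build a range tree $T^d$ with node degree $\gamma=\log^{\eps} n$. In every internal node $u$, and for every range $[l,r]$ of its children, I store the inductive $(d-1)$-dimensional structure on the set $S(u,l,r)$, using only the first $d-1$ coordinates of those points.

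Given a query, I decompose the range on the $d$-th coordinate into its canonical collection consisting of at most two children-ranges per level of $T^d$, yielding $O(\log_\gamma n)=O(\log n/\log\log n)$ canonical node ranges. The key point is that because we have a data structure precomputed for every range of children, we visit only $O(\log_\gamma n)$ pieces rather than the $O(\gamma\log_\gamma n)$ pieces a naive per-child decomposition would give. For each piece I answer a $(d-1)$-dimensional query on the first $d-1$ coordinates using the stored structure. By the inductive hypothesis each such query costs $O((\log n/\log\log n)^{d-4}+k_j)$, and summing over the $O(\log n/\log\log n)$ pieces gives $O((\log n/\log\log n)^{d-3}+k)$, as claimed.

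For the space bound, each point lies in $O(\gamma)$ children-ranges at each of the $O(\log_\gamma n)$ levels of $T^d$, hence in $O(\gamma\log_\gamma n)$ inductive structures in total. Plugging in the $O(m\log^{d-3+\eps} m)$-word bound per instance,
\[ O\!\left(n\cdot\gamma\log_\gamma n\cdot\log^{d-3+\eps}n\right)=O\!\left(n\log^{\eps}n\cdot\frac{\log n}{\log\log n}\cdot \log^{d-3+\eps}n\right)=O(n\log^{d-2+\eps}n), \]
after the standard trick of shrinking the inductive $\eps$ by a constant factor at each of the constant many steps, so that all contributions still fit under $\log^{d-2+\eps}n$. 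The main obstacle I anticipate is the interplay between the $O(\gamma)$-blowup from the children-ranges aggregation and the $\log n/\log\log n$ factor we want per extra dimension: one must verify that $\gamma=\log^{\eps}n$ inflates space by exactly one $\log n$ factor per dimension (after absorbing the $\log\log n$ denominator coming from $\log_\gamma n$) rather than a $\log^{1+\eps}n$ factor that would compound over $d-4$ steps. The remaining bookkeeping (extending one-sided to two-sided queries on each coordinate, which is what forces the $\log n$ jump between Theorem~\ref{theor:opttime} and Theorem~\ref{theor:opttime2}) was already handled at $d=4$ and carries over unchanged to each new dimension.
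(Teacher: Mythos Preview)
Your inductive step is the standard high-degree range-tree reduction and is essentially what the paper has in mind (the paper states Theorem~\ref{theor:dspace} without proof, immediately after Theorem~\ref{theor:4d-space}). However, there is one genuine gap and one bookkeeping slip.

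\textbf{Wrong base case.} You take Theorem~\ref{theor:opttime2} as the case $d=4$, but that theorem only gives $O(n\log^{3+\eps}n)$ space, whereas the bound you are trying to prove at $d=4$ is $O(n\log^{d-2+\eps}n)=O(n\log^{2+\eps}n)$. With your base case the induction yields $O(n\log^{d-1+\eps}n)$, one $\log n$ factor too many. The correct anchor is Theorem~\ref{theor:4d-space}, which brings the four-dimensional space down to $O(n\log^{2+\eps}n)$ via the lopsided-grid argument of Section~\ref{sec:space4d}; this is precisely why Theorem~\ref{theor:dspace} is placed in that section.

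\textbf{Counting children-ranges.} You store a $(d-1)$-dimensional structure for \emph{every} range $[l,r]$ of children, and the canonical decomposition does need one arbitrary range at the LCA (not just prefixes and suffixes). But then a point in child $i$ lies in all ranges $[l,r]$ with $l\le i\le r$, which is $\Theta(\gamma^2)$ ranges per level, not $O(\gamma)$. This does not break the argument: the per-dimension blowup becomes $\gamma^2\log_\gamma n=O(\log^{1+2\eps'}n)$ instead of $O(\log^{1+\eps'}n)$, and the extra $\eps'$ is still absorbed by your halving-the-epsilon trick over the constant number of dimensions. Just correct the count.

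With the base case fixed to Theorem~\ref{theor:4d-space} and the $\gamma^2$ accounting, your proof goes through and coincides with the paper's intended (unwritten) argument.
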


\bibliographystyle{plain}
\bibliography{range-search}
\end{document}